\documentclass[10pt,a4paper]{amsart}
\usepackage[margin=20mm]{geometry}
\usepackage[numbers]{natbib}
\usepackage{hyperref}

\usepackage{amssymb,amsmath}
\usepackage{amsthm}
\usepackage{bbm}
\usepackage{stmaryrd}
\usepackage[usenames,dvipsnames]{xcolor}
\usepackage{color}

\input xy
\xyoption{all} 


\numberwithin{equation}{section}

\newtheorem{theorem}{Theorem}[section]
\newtheorem{corollary}[theorem]{Corollary}

\newtheorem{proposition}[theorem]{Proposition}

\theoremstyle{definition}
\newtheorem{definition}[theorem]{Definition}
\newtheorem{remark}[theorem]{Remark}
\newtheorem{example}[theorem]{Example}

\newcommand{\p}{\mbox{\boldmath$\rho$}}
\newcommand{\Bs}{\mbox{\boldmath$s$}}

\newcommand{\rmd}{\textnormal{d}}

\newcommand{\rmh}{\textnormal{h}}

\DeclareMathOperator{\Vect}{Vect}

\DeclareMathOperator{\Ber}{Ber}

\font\black=cmbx10 \font\sblack=cmbx7 \font\ssblack=cmbx5 \font\blackital=cmmib10  \skewchar\blackital='177
\font\sblackital=cmmib7 \skewchar\sblackital='177 \font\ssblackital=cmmib5 \skewchar\ssblackital='177
\font\sanss=cmss10 \font\ssanss=cmss8 
\font\sssanss=cmss8 scaled 600 \font\blackboard=msbm10 \font\sblackboard=msbm7 \font\ssblackboard=msbm5
\font\caligr=eusm10 \font\scaligr=eusm7 \font\sscaligr=eusm5  \font\fraktur=eufm10
\font\sfraktur=eufm7 \font\ssfraktur=eufm5 
\font\bsymb=cmsy10 scaled\magstep2
\def\all#1{\setbox0=\hbox{\lower1.5pt\hbox{\bsymb
       \char"38}}\setbox1=\hbox{$_{#1}$} \box0\lower2pt\box1\;}
\def\exi#1{\setbox0=\hbox{\lower1.5pt\hbox{\bsymb \char"39}}
       \setbox1=\hbox{$_{#1}$} \box0\lower2pt\box1\;}

\def\tx#1{{\fam0\relax#1}}

\newfam\bifam
\textfont\bifam=\blackital \scriptfont\bifam=\sblackital \scriptscriptfont\bifam=\ssblackital

\newfam\blfam
\textfont\blfam=\black \scriptfont\blfam=\sblack \scriptscriptfont\blfam=\ssblack

\newfam\bbfam
\textfont\bbfam=\blackboard \scriptfont\bbfam=\sblackboard \scriptscriptfont\bbfam=\ssblackboard

\newfam\ssfam
\textfont\ssfam=\sanss \scriptfont\ssfam=\ssanss \scriptscriptfont\ssfam=\sssanss
\def\sss#1{{\fam\ssfam\relax#1}}

\newfam\clfam
\textfont\clfam=\caligr \scriptfont\clfam=\scaligr \scriptscriptfont\clfam=\sscaligr

\newfam\frfam
\textfont\frfam=\fraktur \scriptfont\frfam=\sfraktur \scriptscriptfont\frfam=\ssfraktur

\def\hpb#1{\setbox0=\hbox{${#1}$}
    \copy0 \kern-\wd0 \kern.2pt \box0}
\def\vpb#1{\setbox0=\hbox{${#1}$}
    \copy0 \kern-\wd0 \raise.08pt \box0}

\def\pmb#1{\setbox0\hbox{${#1}$} \copy0 \kern-\wd0 \kern.2pt \box0}
\def\pmbb#1{\setbox0\hbox{${#1}$} \copy0 \kern-\wd0
      \kern.2pt \copy0 \kern-\wd0 \kern.2pt \box0}
\def\pmbbb#1{\setbox0\hbox{${#1}$} \copy0 \kern-\wd0
      \kern.2pt \copy0 \kern-\wd0 \kern.2pt
    \copy0 \kern-\wd0 \kern.2pt \box0}
\def\pmxb#1{\setbox0\hbox{${#1}$} \copy0 \kern-\wd0
      \kern.2pt \copy0 \kern-\wd0 \kern.2pt
      \copy0 \kern-\wd0 \kern.2pt \copy0 \kern-\wd0 \kern.2pt \box0}
\def\pmxbb#1{\setbox0\hbox{${#1}$} \copy0 \kern-\wd0 \kern.2pt
      \copy0 \kern-\wd0 \kern.2pt
      \copy0 \kern-\wd0 \kern.2pt \copy0 \kern-\wd0 \kern.2pt
      \copy0 \kern-\wd0 \kern.2pt \box0}


\mathchardef\za="710B  
\mathchardef\zb="710C  
\mathchardef\zg="710D  
\mathchardef\zd="710E  
\mathchardef\zve="710F 
\mathchardef\zz="7110  
\mathchardef\zh="7111  
\mathchardef\zvy="7112 
\mathchardef\zi="7113  
\mathchardef\zk="7114  
\mathchardef\zl="7115  
\mathchardef\zm="7116  
\mathchardef\zn="7117  
\mathchardef\zx="7118  
\mathchardef\zp="7119  
\mathchardef\zr="711A  
\mathchardef\zs="711B  
\mathchardef\zt="711C  
\mathchardef\zu="711D  
\mathchardef\zvf="711E 
\mathchardef\zq="711F  
\mathchardef\zc="7120  
\mathchardef\zw="7121  
\mathchardef\ze="7122  
\mathchardef\zy="7123  
\mathchardef\zf="7124  
\mathchardef\zvr="7125 
\mathchardef\zvs="7126 
\mathchardef\zf="7127  
\mathchardef\zG="7000  
\mathchardef\zD="7001  
\mathchardef\zY="7002  
\mathchardef\zL="7003  
\mathchardef\zX="7004  
\mathchardef\zP="7005  
\mathchardef\zS="7006  
\mathchardef\zU="7007  
\mathchardef\zF="7008  
\mathchardef\zW="700A  
\mathchardef\zC="7009  

\newcommand{\be}{\begin{equation}}
\newcommand{\ee}{\end{equation}}

\newcommand{\bea}{\begin{eqnarray}}
\newcommand{\eea}{\end{eqnarray}}
\def\*{{\textstyle *}}
\newcommand{\R}{{\mathbb R}}

\newcommand{\Z}{{\mathbb Z}}

\newcommand{\s}{{\textstyle *}}







\def\Vect{\sss{Vect}}




\def\sT{{\sss T}}

\def\xi{\tx{i}}


\def\s*{{\scriptstyle *}}

\def\cO{\mathcal{O}}


\newcommand{\beas}{\begin{eqnarray*}}
\newcommand{\eeas}{\end{eqnarray*}}

\def\half{\frac{1}{2}}

\title{Modular Classes of Q-Manifolds, Part II:\\   Riemannian Structures \& Odd Killing Vectors Fields }

   \author{Andrew James Bruce} 
   \address{Mathematics Research Unit, University of Luxembourg, Maison du Nombre 6, avenue de la Fonte, 
L-4364 Esch-sur-Alzette}  
   \email{andrewjamesbruce@googlemail.com}
  

\date{\today}
 \begin{document}

\begin{abstract}
We define and make an initial study of  (even) Riemannian supermanifolds equipped with a homological vector field that is also a Killing vector field. We refer to such supermanifolds as \emph{Riemannian Q-manifolds}. We show that such Q-manifolds are unimodular, i.e., come equipped with a Q-invariant Berezin volume.  \par
\smallskip\noindent
{\bf Keywords:} 
Q-manifolds;~ Riemannian supermanifolds;~ Killing vector fields;~modular classes.\par
\smallskip\noindent
{\bf MSC 2010:} 17B66;~57R20;~57R25;~58A50;~58B20.
\end{abstract}

 \maketitle

\setcounter{tocdepth}{2}
 \tableofcontents

\section{Introduction} 
This paper is a direct continuation of an earlier paper by the author \cite{Bruce:2017} in which the notion of the modular class of a Q-manifold was reviewed and various illustrative examples are given. Q-manifolds (see \cite{Schwarz:1993}), i.e., supermanifolds equipped with an odd vector field that `squares to zero', have become an important part of mathematical physics due to their prominence in the AKSZ-formalism \cite{Alexandrov:1997} and the conceptionally neat formalism they provide for describe Lie algebroids \cite{Vaintrob:1997} and Courant algebroids \cite{Roytenberg:2002}, as well as various generalisations thereof. The modular class of a Q-manifold (see \cite{Lyakhovich:2004,Lyakhovich:2010}) is a natural generalisation of the modular class of a Lie algebroid  \cite{Evans:1999}. \par 
 The modular class of a Q-manifold is given in terms of the divergence of the homological vector field, though it does not depend on the chosen Berezin volume. The vanishing of the modular class is a necessary and sufficiency condition for the existence of a $Q$-invariant Berezin volume. Q-manifolds with vanishing modular class are known as \emph{unimodular Q-manifolds}.  Here we given another class of examples of unimodular Q-manifolds by considering (even) Riemannian supermanifolds that admit an odd Killing vector field that is homological.  We will refer to such supermanifolds as \emph{Riemannian Q-manifolds}. To our knowledge, such supermanifolds have not appeared in the literature before now. The notion of \emph{supersymmetric Killing structures} appears in the work of Klinker \cite{Klinker:2005}.\par 
Riemannian Q-manifolds are reminiscent of even symplectic supermanifolds in the sense that Killing vector fields are akin to Hamiltonian vector fields. Moreover, we have a version of Liouville's theorem on even symplectic supermanifolds that states that there is always a Berezin volume that is invariant with respect to all Hamiltonian vector fields. This implies, for example, that the modular class of a Courant algebroid (or more properly, a symplectic Lie 2-algebroid \cite{Roytenberg:2002}) vanishes.  The direct analogue of this is explicitly proved in this paper, though the result should not come as a surprise: the canonical Berezin volume on a Riemannian supermanifold is invariant under the action of Killing vector fields. This directly implies that the modular class of a Riemannian Q-manifold vanishes.  This paper is devoted to  explicitly proving this. Moreover, at each stage, we give concrete examples. \par 
 An incomplete list of relatively recent papers on Riemannian supermanifolds includes \cite{Galaev:2012,Garnier:2014,Garnier:2012,Goertsches:2008,Groeger:2014,Klaus:2019}. We do not believe that this paper contains anything truly new about Riemannian supergeometry. However, finding clear references to the expressions we require is not so easy. Thus, part of this paper is devoted to setting-up what we need to describe Riemannian Q-manifolds.
\medskip

\noindent \textbf{Arrangement.}  In Section \ref{sec:RiemSup} we recall the basic facets of Riemannian supergeometry relevant to our needs. In particular, we pay attention to  Killing vector fields, the canonical Berezin volume and the divergence operator. We then move on to Q-manifolds and their modular classes in Section \ref{sec:QMod}. Much of this section is taken from \cite{Bruce:2017} and references therein. In Section \ref{sec:RiemQ} we define the notion of a Riemannian Q-manifold and explore some of their basic properties.  We end  with Section \ref{sec:Conc} with a few concluding remarks.

\medskip
\noindent \textbf{Our use of supermanifolds.} We assume that the reader has some familiarity with the basics of the theory of supermanifolds. We will  understand a \emph{supermanifold} $M := (|M|, \:  \cO_{M})$ of dimension $n|m$  to be a supermanifold in the sense of Berezin \& Leites  \cite{Berezin:1976}, i.e., as a locally superringed space that is locally isomorphic to $\mathbb{R}^{n|m} := \big (\R^{n}, C^{\infty}(\R^{n})\otimes \Lambda(\zx^{1}, \cdots \zx^{m}) \big)$. In particular,  given any point on $|M|$ we can always find a `small enough' open neighbourhood $|U|\subseteq |M|$ such that we can  employ local coordinates $x^{a} := (x^{\mu} , \zx^{i})$ on $M$.  We will call (global) sections of the structure sheaf \emph{functions}, and often  denote the supercommutative algebra of all functions as $C^{\infty}(M)$. The underlying smooth manifold $|M|$ we refer to as the \emph{reduced manifold}.  We will make heavy use of local coordinates on supermanifolds and employ the standard abuses of notation when it comes to describing, for example, morphisms of supermanifolds. We will denote the Grassmann parity of an object $A$ by `tilde', i.e., $\widetilde{A} \in \Z_{2}$.  By `even' and `odd' we will be referring to the Grassmann parity of  the objects in question. As we will work in the category of smooth supermanifolds, all the algebras, commutators etc. will be $\Z_{2}$-graded.\par
The \emph{tangent sheaf} $\mathcal{T}M$ of a supermanifold $M$ is the sheaf of derivations of sections of the structure sheaf -- this is, of course, a sheaf of locally free $\cO_{M}$-modules. Global sections of the tangent sheaf we refer to as \emph{vector fields}, and denote the $\cO_{M}(|M|)$-module of vector fields as $\Vect(M)$. The total space of the tangent sheaf we will denote by $\sT M$ and refer to this as the \emph{tangent bundle}. By shifting the parity of the fibre coordinates one obtains the \emph{antitangent bundle} $\Pi \sT M$. We will reserve the nomenclature \emph{vector bundle} for the total space of a sheaf of locally free $\cO_{M}$-modules, that is we will be referring to `geometric vector bundles'.\par
There are several good books on the subject of supermanifolds and we suggest  Carmeli, Caston \& Fioresi \cite{Carmeli:2011},  Manin \cite{Manin:1997}  and Varadrajan \cite{Varadrajan:2004} as general references. The encyclopedia edited by Duplij, Siegel \& Bagger  \cite{Duplij:2004} is also indispensable, as is the review paper by Leites \cite{Leites:1980}.  DeWitt \cite[Section 2.8]{DeWitt:1992} discusses in some detail Riemannian geometry on DeWitt--Rogers supermanifolds. While some care is needed in translating between supermanifolds (as locally ringed spaces) and DeWitt--Rogers supermanifolds, most of the expressions given by DeWitt on Riemannian structures remain valid in Riemannian supergeometry.

\section{Riemannian supermanifolds}\label{sec:RiemSup}
\subsection{The tangent bundle of a supermanifold and symmetric tensors}
The \emph{tangent bundle} $\sT M$ of a supermanifold $M$, we define as a natural bundle via local coordinates in almost exactly the same way as one can for a smooth manifold. For convenience, we sketch the construction here.  \par  
Let $M = (|M|, \cO_M)$ be a supermanifold  equipped with an atlas $\{ U_i, \rmh_i\}_{i \in \mathcal{I}}$. Here $|U_i| \subset |M|$ form an open cover of $M$ and $U_i =  (|U_i| , \cO_M|_{|U_i|})$. The maps
$$\rmh_i :  U_i \longrightarrow \mathcal{U}_i^{n|m}$$
are supermanifold diffeomorphisms. Here  $\mathcal{U}_i^{n|m}$ are superdomains, i.e., open subsupermanifolds of $\R^{n|m}$.  Over non-empty $|U_{ij}| = |U_i| \cap |U_j|$ we have transition functions (induced glueing data)
$$\rmh_j \circ \rmh^{-1}_i : ~  \mathcal{U}_i^{n|m} \longrightarrow \mathcal{U}_j^{n|m}\,,$$
were we have neglected to write out the obvious restrictions. It is clear that such maps satisfy the cocycle conditions and so constitute glueing data. Suppose that we have coordinates $x^{a'}$ on $\mathcal{U}_j^{n|m}$ and $x^{a}$ on $\mathcal{U}_i^{n|m}$. Then the changes of coordinates we write as
$$x^{a'} = x^{a'}(x),$$
by employing the standard abuses of notation. \par 
We define the tangent bundle $\sT M$ by its atlas $\{\sT U_i, \sT \rmh_i \}_{i \in \mathcal{I}}$ induced from the given atlas on $M$. That is, given any $U_i$ in the atlas we have
$$\sT \rmh_i :  ~ \sT U_i \longrightarrow  \mathcal{U}_i^{n|m} \times \R^{n|m}\,.$$
Clearly, $|\sT U_i| \cong \mathcal{U}^n_i \times \R^n $.  The induced glueing data is easiest to explain using natural coordinates $(x^a , \dot{x}^b)$. Again using the standard abuses of notation, the admissible coordinate transformations are of the form
\begin{align*}
x^{a'}= x^{a'}(x), && \dot{x}^{b'} = \dot{x}^{b}\left(\frac{\partial x^{b'}}{\partial x^b}\right)\,.
\end{align*}
One can show that we do indeed construct a supermanifold of dimension $2n|2m$ in this way. Moreover, it is clear that we have a vector bundle structure on $\sT M$. As such, the tangent bundle can be considered as a non-negatively graded supermanifold (see \cite{Grabowski:2012, Roytenberg:2002, Voronov:2002}).  In particular, we assign weight zero to the base coordinates $x$ and weight one to the fibre coordinates $\dot{x}$. As the admissible coordinate transformations respect the assignment of weight,  it makes sense to speak of functions on $\sT M$ of a given weight. Moreover, it is known that homogeneous functions on $\sT M$ are monomial on the fibre coordinates.  We will denote the \emph{polynomial algebra} on $\sT M$ as $\mathcal{A}(\sT M)$. Clearly, $\mathcal{A}^0(\sT M) = C^\infty(M)$. Note that the polynomial algebra as a natural (right) $C^\infty(M)$-module structure.  We will denote the submodule of monomials of degree $k$ as $\mathcal{A}^k(\sT M)$. We  make the following definition.
\begin{definition}
The $C^\infty(M)$-module of rank $k$ \emph{symmetric covariant tensors} on a supermanifold $M$ is defined to be the $C^\infty(M)$-module of monomials on $\sT M$ of weight $k$.
\end{definition}
Locally in natural coordinates, $T \in \mathcal{A}^k(\sT M)$ looks like
$$T = \dot{x}^{a_1} \dot{x}^{a_2} \cdots \dot{x}^{a_k} T_{a_k \cdots a_2 a_1}(x)\,$$
where the components $T_{a_k \cdots a_2 a_1}$ are (super)symmetric.

\subsection{Riemannian structures}
\begin{definition}
A  \emph{Riemannian metric} on a supermanifold $M$, is an even, symmetric, non-degenerate, $\cO_M$-linear morphisms of sheaves
$$\mathcal{T}M \otimes_{\cO_M} \mathcal{T}M \longrightarrow \cO_M.$$
A \emph{Riemannian supermanifold} is a supermanifold equipped with a Riemannian metric.
\end{definition}
In terms of vector fields, we have the following properties:
\begin{enumerate}
\item$\widetilde{\langle X| Y \rangle_g} = \widetilde{X} + \widetilde{Y}$;
\item $\langle X| Y \rangle_g = (-1)^{\widetilde{X} \, \widetilde{Y}} \langle Y| X \rangle_g$;
\item If $\langle X| Y \rangle_g =0$  for all $Y \in \Vect(M)$, then $X =0$;
\item $\langle f X + Y| Z \rangle_g = f \langle X|Z \rangle_g + \langle Y| Z\rangle_g$, 
\end{enumerate} 
For all (homogeneous) $X,Y,Z \in \Vect(M)$ and $f \in C^\infty(M)$.   
\begin{remark}
A Riemannian metric on $M$ naturally induces a pseudo-Riemannian metric on the reduced manifold $|M|$.  As we will not explicitly make use of this reduced structure we will not spell-out the construction. 
\end{remark}
A Riemannian metric is specified by an even degree two function $g \in \mathcal{A}^2(\sT M)$, i.e., a Grassmann degree zero rank $2$ symmetric covariant tensor.  In local coordinates, we write
$$g(x, \dot{x}) = \dot{x}^a \dot{x}^b \,g_{ba}(x).$$  
Under changes of coordinates $x^a \mapsto x^{a'}(x)$ the components of the metric transform as 
$$g_{b'a'}(x') = (-1)^{\widetilde{a}' \, \widetilde{b}} \left(\frac{\partial x^b}{\partial x^{b'}}\right) \left(\frac{\partial x^a}{\partial x^{a'}}\right)~  g_{ab}\,,$$
where we have explicitly used the symmetry $g_{ab} = (-1)^{\widetilde{a} \, \widetilde{b}} \, g_{ba}$.\par 
If we denote the vertical lift of a vector field by $\iota_X$, which in local coordinates is given by
$$X = X^a(x)\frac{\partial}{\partial x^a} \rightsquigarrow \iota_X :=  X^{a}(x)\frac{\partial}{\partial \dot{x}^a} \in \Vect(\sT M)\,,$$
then we observe that
$$\langle X| Y \rangle_g = \frac{1}{2} \iota_X \iota_Y g\,,$$ 
which leads to the local expression 
$$  \langle X| Y \rangle_g =  (-1)^{\widetilde{Y}\, \widetilde{a}} ~ X^a(x)Y^b(x) g_{ba}(x).$$
It is a  straightforward exercise to show that the above local expression for the metric pairing is invariant under changes of coordinates. \par 
 It is well-known that the non-degeneracy condition forces the dimensions of the supermanifold $M$ to be $n | 2\,p$, i.e., we require an even number of odd dimensions. 
\begin{example}
As any manifold can be considered as a supermanifold with vanishing `odd directions', i.e., a supermanifold of dimension $n|0$, \emph{any} (pseudo-)Riemannian manifold can be considered as a Riemannian supermanifold. 
\end{example}
\begin{example}
Consider $\R^{1|2}$ equipped with canonical global coordinates $(t, \zx^1, \zx^2)$.  Any vector field decomposes as
$$X = X^0 \frac{\partial}{\partial t} + X^1 \frac{\partial}{\partial \zx^1} + X^2 \frac{\partial}{\partial \zx^2}\,, $$
where each component is a function of the canonical coordinates. The standard metric is given by
$$g = (\dot{t})^2 \pm 2 \,\dot{\zx}^1 \dot{\zx}^2\,,$$
where we have a choice with the sign for the `odd part' of the metric. Then a simple calculation gives
$$\langle X, | Y\rangle_g =  X^0 Y^0 \pm (-1)^{\widetilde{Y}} (X^1 Y^2 - X^2 Y^1).$$
\end{example}
\begin{example}
Consider $\R^{3|2}$ equipped with standard global coordinates $(x,y,z, \zx^1, \zx^2)$. The equation
$$x^2 + y^2 + z^2 - 2 \, \zx^1 \zx^2 =1 $$
defines the super-sphere $\mathbb{S}^{2|2} \subset \R^{3|2}$ (using slight abuse of notation). As  (local)  coordinates on $\mathbb{S}^{2|2}$ we can use  the  standard angles $(\theta, \phi)$,  i.e.,   the  coordinates inherited  from using  polar  coordinates on $\R^3$,  complemented  by  $(\zx^1, \zx^2)$  inherited  from  the `super-environment'.  The reduced  manifold is  standard two-sphere.  As a sub-supermanifold of the Riemannian supermanifold $\R^{3|2}$, the super-sphere is equipped with a non–degenerate metric inhered from the embedding.  This metric is given by
$$g = \dot{\theta}^2 + \sin^2 \theta \, \dot{\phi} - 2 \,\dot{\zx}^1 \dot{\zx}^2 \,.$$
\end{example}
\begin{example}
Let $M$ be an almost symplectic manifold, i.e., a manifold equipped with a non-degenerate two-form $\omega$, that this not necessary closed. This forces the dimension of $M$ to be even. Furthermore, let us assume that $M$ is equipped with a Riemannian metric, which we will denote as $h$. It is always possible to equip \emph{any} smooth manifold with a Riemannian metric and we will not require any compatibility condition between the almost symplectic structure $\omega$ and the Riemannian structure $h$. We want to build a Riemannian metric on the supermanifold $\Pi \sT M$.  To do this, consider the double supervector bundle $\sT(\Pi \sT M)$, which we equip with natural coordinates  $(x^a, \rmd x^b, \dot{x}^c , \rmd \dot{x}^d)$. Admissible changes of coordinates are of the form (using standard abuses of notation)
\begin{align*}
&x^{a'} = x^{a'}(x), & \rmd x^{b'} = \rmd x^a \frac{\partial x^{b'}}{\partial x^a},\\
&\dot{x}^{c'} = \dot{x}^b \frac{\partial x^{c'}}{\partial x^b},& \rmd\dot{x}^{d'} = \rmd\dot{x}^{c}  \frac{\partial x^{b'}}{\partial x^c} + \dot{x}^{b} \rmd{x}^c \frac{\partial^2 x^{d'}}{ \partial x^c \partial x^b}.
\end{align*}
The Levi-Civita connection $\nabla$ associated with the metric induces a splitting 
$$\sT (\Pi \sT M) \stackrel{\phi_h}{\xrightarrow{\hspace*{25pt}}} \Pi \sT M \times_M \sT M \times_M \Pi \sT M\,,$$
which we write in natural coordinates as
$$\phi^*_h \zx^a = \rmd\dot{x}^a + \rmd{x}^b \dot{x}^c \Gamma^a_{cb}(x) =: \nabla \dot{x}^a.$$
Here $\zx^a$ are the (fibre)  coordinates on last factor of the decomposed or split double supervector bundle.  The splitting $\phi_h$ is understood as acting as the identity on the remaining coordinates, i.e., we just canonically make the required identifications. On the decomposed  double supervector bundle we can take the sum of the Riemannian metric and the almost symplectic structure. In natural coordinates we have
$$G :=  \dot{x}^a \dot{x}^b g_{ba}(x) + \zx^a \zx^b \omega_{ba}(x).$$
The metric on $\sT(\Pi \sT M)$ is then the pull-back of $G$ by the splitting.  Thus, we write
$$g = \phi^*_h G  = \dot{x}^a \dot{x}^b g_{ba}(x) + \nabla \dot{x}^a \nabla \dot{x}^b \omega_{ba}(x).$$
 
\end{example}

\begin{remark}
Odd Riemannian structures can similarly be defined. There are no changes to the above definition except that the parity now is shifted, i.e., the pairing between two vector fields will now be $\widetilde{X} + \widetilde{Y} +1$. The condition of being non-degenerate now forces there to be an equal number of even and odd dimensions.  We will only consider even metrics in this paper. The reason, in part, is that while even metrics, together with even and odd symplectic structures, have found application in physics, odd Riemannian structures remain a mathematical curiosity. 
\end{remark}
All the standard constructions of classical Riemannian geometry generalise to Riemannian supermanifolds, for example the fundamental theorem holds. We will not make use of the Levi-Civita connection or the curvature tensors in this paper. They can all be defined via minor sign modifications of the classical definitions (see for example \cite{Monterde:1996}).
\begin{remark}
There is also the notion of a quasi-Riemannian structure due to Mosman \& Sharapov \cite{Mosman:2011}, which intriguingly exists on any supermanifold.  This structure understood as a pair $(G,\nabla)$, where $G$ symmetric positive definite tensor field of type $(0,2)$ and  $\nabla$ is a compatible affine connection, which in general is not symmetric. Naturally, an even Riemannian structures and metric compatible, but not necessarily torsion free affine connection is an example of a quasi-Riemannian structure. 
\end{remark}

\subsection{Killing vector fields}
Killing vector fields are defined in exactly the same way as in classical Riemannian geometry.
\begin{definition}\label{def:KillVF}
A vector field $X \in \Vect(M)$ is said to be a Killing vector field if and only if
$$L_X g =0\,.$$
\end{definition}
At this juncture, we need to explain the above Lie derivative and derive a local expression. Recall that any homogeneous vector field $X \in \Vect(M)$ defines a  local infinitesimal diffeomorphism (see \cite[\S 2.3.9.]{Voronov:2014}) of $\sT M$, which in local coordinates is of the form
\begin{align*}
& x^a \mapsto x^a + \lambda \, X^a(x)\,,\\
& \dot{x}^a \mapsto  \dot{x}^a + \lambda \, \dot{x}^b \frac{\partial X^a}{\partial x^b \hfill}(x)\,,
\end{align*} 
where $\lambda$ is an external parameter of degree $\widetilde{\lambda} = \widetilde{X}$. Under this local diffeomorphism  a quick calculation shows that the metric  $g$ changes as 
$$g(x, \dot x) \mapsto g(x, \dot x) + \lambda \, \dot{x}^a \dot{x}^b \left((-1)^{\widetilde{X} \, \widetilde{a}} \frac{\partial X^c}{\partial x^b} g_{ca} + (-1)^{\widetilde{b}(\widetilde{X} + \widetilde{a})}  \frac{\partial X^c}{\partial x^a} g_{cb} + (-1)^{\widetilde{X}(\widetilde{a} + \widetilde{b})} X^c\frac{\partial g_{ba}}{\partial x^c}  \right) + \cO(\lambda^2).$$
By definition, locally, the Lie derivative is given by the first-order term in $\lambda$. Thus, we have the local expression
\begin{equation}\label{eqnm:LieDerMet}
(L_Xg)_{ba} = (-1)^{\widetilde{X} \, \widetilde{a}} \frac{\partial X^c}{\partial x^b} g_{ca} + (-1)^{\widetilde{b}(\widetilde{X} + \widetilde{a})}  \frac{\partial X^c}{\partial x^a} g_{cb} + (-1)^{\widetilde{X}(\widetilde{a} + \widetilde{b})} X^c\frac{\partial g_{ba}}{\partial x^c}.
\end{equation}
Naturally, this local expression is identical to the classical one up to some sign factors. 
\begin{proposition}
 The set of all Killing vector fields on even Riemannian supermanifold $(M,g)$ forms a Lie algebra with respect to the standard Lie bracket of vector fields on $M$.
\end{proposition}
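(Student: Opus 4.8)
The plan is to realise the Killing vector fields as a Lie subalgebra of $\big(\Vect(M),[\,\cdot\,,\cdot\,]\big)$, so that graded antisymmetry and the (graded) Jacobi identity are inherited for free; what actually has to be checked is only that they form an $\R$-linear subspace and that this subspace is closed under the bracket.

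The first point is immediate from the local formula \eqref{eqnm:LieDerMet}: its right-hand side is additive and $\R$-homogeneous in the components of $X$, so $L_{X+Y}g = L_Xg + L_Yg$ and $L_{\lambda X}g = \lambda\,L_Xg$ for $\lambda\in\R$. Hence the Killing vector fields are precisely the kernel of the $\R$-linear map $X \mapsto L_Xg$, and in particular a $\Z_2$-graded $\R$-linear subspace of $\Vect(M)$.

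For closure under the bracket I would use the lift to $\sT M$ already present in the excerpt. A homogeneous $X = X^a\,\partial/\partial x^a \in \Vect(M)$ generates the flow on $\sT M$ displayed just before \eqref{eqnm:LieDerMet}; its infinitesimal generator is
\[
\widehat{X} := X^a\,\frac{\partial}{\partial x^a} + \dot{x}^b\,\frac{\partial X^a}{\partial x^b}\,\frac{\partial}{\partial \dot{x}^a} \in \Vect(\sT M),
\]
and, by the very way \eqref{eqnm:LieDerMet} was derived, $L_Xg = \widehat{X}(g)$ for the degree-two function $g \in \mathcal{A}^2(\sT M)$. The one genuine ingredient is the naturality identity
\[
\widehat{[X,Y]} = [\widehat{X},\widehat{Y}] := \widehat{X}\circ\widehat{Y} - (-1)^{\widetilde{X}\,\widetilde{Y}}\,\widehat{Y}\circ\widehat{X},
\]
i.e.\ that $X \mapsto \widehat{X}$ is a homomorphism of $\Z_2$-graded Lie algebras; this is the infinitesimal shadow of the functoriality of the tangent functor $\sT$ and follows from a direct chain-rule computation in natural coordinates (cf.\ \cite[\S 2.3.9.]{Voronov:2014}). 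Granting it, for homogeneous $X,Y$ one gets
\[
L_{[X,Y]}g = \widehat{[X,Y]}(g) = \widehat{X}\big(\widehat{Y}(g)\big) - (-1)^{\widetilde{X}\,\widetilde{Y}}\,\widehat{Y}\big(\widehat{X}(g)\big) = \widehat{X}\big(L_Yg\big) - (-1)^{\widetilde{X}\,\widetilde{Y}}\,\widehat{Y}\big(L_Xg\big).
\]
If $X$ and $Y$ are Killing, both terms on the right vanish, so $[X,Y]$ is Killing; the inhomogeneous case follows by additivity in each argument.

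Combining the two points, the Killing vector fields form a Lie subalgebra (in fact a Lie sub-superalgebra) of $\Vect(M)$. The main obstacle is really just the graded naturality identity $\widehat{[X,Y]} = [\widehat{X},\widehat{Y}]$ together with the correct placement of the sign factors; everything else is formal. One could instead prove closure by brute force, substituting $L_Xg = 0$ and $L_Yg = 0$ into \eqref{eqnm:LieDerMet} applied to the components $[X,Y]^a = X^b\,\partial_b Y^a - (-1)^{\widetilde{X}\,\widetilde{Y}}Y^b\,\partial_b X^a$, but this is a long and unenlightening sign-chase.
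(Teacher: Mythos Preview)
Your proof is correct and follows essentially the same approach as the paper, which simply invokes $L_{[X,Y]} = [L_X, L_Y]$ and notes that the argument parallels the classical case. Your version makes this explicit in the framework the paper has set up: since $L_Xg$ is defined there as $\widehat{X}(g)$ via the tangent lift, your identity $\widehat{[X,Y]} = [\widehat{X},\widehat{Y}]$ is precisely $L_{[X,Y]} = [L_X,L_Y]$ acting on $g \in \mathcal{A}^2(\sT M)$.
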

\begin{proof}
This follows in complete parallel with the classical case  using $L_{[X,Y]} = [L_X, L_Y]$.
\end{proof}

\subsection{The inverse metric and the trace}
The non-degeneracy of a metric implies that the components, thought of as a rank-2 covariant tensor, is invertible.  The defining relation for the \emph{inverse metric} is
$$g^{ac}g_{cb} = g_{bc}g^{ca} = \delta_b^a\,,$$
just as it is on a classical Riemannian manifold.  Clearly, the inverse metric is even. The above relation allows us  to deduce the symmetry property of the inverse metric.
\begin{proposition}
The inverse metric $g^{ab}$ has the following symmetry:
$$(-1)^{\widetilde{b}} \, g^{ab} = (-1)^{\widetilde{a} \, \widetilde{b} + \widetilde{a}} \, g^{ba}\,.$$
\end{proposition}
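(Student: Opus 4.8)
The plan is to start from the defining relation $g^{ac}g_{cb} = \delta^a_b$ together with the known symmetry of the metric components, $g_{ab} = (-1)^{\widetilde a\,\widetilde b}\,g_{ba}$, and manipulate indices until the claimed identity drops out. First I would observe that the Kronecker delta is symmetric in a trivial sense, $\delta^a_b = \delta^a_b$, but the useful content comes from comparing two ways of writing the contraction $g^{ac}g_{cb}$. Concretely, I would take the relation $g^{ac}g_{cb} = \delta^a_b$, rename indices so that it reads $g^{bc}g_{ca} = \delta^b_a$, and then contract both relations with suitable inverse-metric factors to isolate $g^{ba}$ in terms of $g^{ab}$.

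The cleaner route is probably this: write $g^{ba}$ by raising both indices of $\delta$ or, more directly, compute $g^{ab}$ two ways. Since $g_{cb}$ is the inverse of $g^{ac}$, we have $g^{ab} = g^{ac} g_{cd} g^{db}$ for any insertion of $\delta^c_d$; then substitute $g_{cd} = (-1)^{\widetilde c\,\widetilde d} g_{dc}$ and re-associate. Each time one moves a component past another in a product, one picks up the appropriate Koszul sign governed by the Grassmann parities $\widetilde a, \widetilde b, \widetilde c, \widetilde d$, and one must also track the parity constraints forced by the equations: in $g^{ac}g_{cb} = \delta^a_b$ the left side is nonzero only when $\widetilde a = \widetilde b$, and $g^{ac}$ is even so $\widetilde{g^{ac}} = \widetilde a + \widetilde c$, which pins down $\widetilde c = \widetilde a$ on the relevant terms. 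Feeding these parity relations into the sign bookkeeping should collapse the accumulated exponents to exactly $(-1)^{\widetilde b}$ on one side and $(-1)^{\widetilde a\widetilde b + \widetilde a}$ on the other.

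An alternative, perhaps more transparent, approach is to regard the metric as the even isomorphism $\flat\colon \mathcal T M \to \mathcal T^* M$ and the inverse as $\sharp$, and to derive the inverse symmetry abstractly from the symmetry of $\flat$ as a map, i.e.\ from $\langle X|Y\rangle_g = (-1)^{\widetilde X\widetilde Y}\langle Y|X\rangle_g$. Passing from the invariant statement to components introduces precisely the sign factors appearing in the local expression $\langle X|Y\rangle_g = (-1)^{\widetilde Y\widetilde a} X^a Y^b g_{ba}$ quoted in the excerpt, and dualizing (replacing $g_{ba}$ by $g^{ba}$, with the dual pairing of $\mathcal T^*M$ carrying its own signs) yields the stated formula. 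I would likely present the index computation as the main argument and mention the invariant viewpoint as a remark.

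The main obstacle is purely the sign accounting: there is no conceptual difficulty, but it is easy to misplace a factor of $(-1)^{\widetilde a \widetilde b}$ or $(-1)^{\widetilde c}$ when commuting components and when using the parity constraints $\widetilde a = \widetilde c$, $\widetilde b = \widetilde c$ that the defining relations impose. The safeguard is to do the computation twice — once by raising indices on $\delta^a_b$ and once by the abstract duality argument — and check that the two give the same exponent modulo $2$. I expect the identity $(-1)^{\widetilde b} g^{ab} = (-1)^{\widetilde a\widetilde b + \widetilde a} g^{ba}$ to be equivalent, after multiplying through by $(-1)^{\widetilde b}$, to $g^{ab} = (-1)^{\widetilde a\widetilde b + \widetilde a + \widetilde b} g^{ba}$, which is the natural "inverse of a symmetric even tensor is symmetric" statement with the expected shift, and this equivalent reformulation is a good consistency check to state along the way.
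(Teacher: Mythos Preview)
Your strategy is correct and is essentially the same as the paper's: both arguments start from the defining relation $g^{ac}g_{cb}=\delta^a_b$, use the symmetry $g_{ab}=(-1)^{\widetilde a\widetilde b}g_{ba}$, and track Koszul signs to obtain the claimed identity. The paper's execution is a little more streamlined than the route you sketch: rather than inserting $g^{ac}g_{cd}g^{db}$ and re-associating, it simply posits $g^{ab}=(-1)^{\lambda}g^{ba}$ with $\lambda$ unknown, rewrites $g^{ac}g_{cb}$ as $(-1)^{\widetilde a\widetilde c+\widetilde a\widetilde b+\widetilde c+\lambda}g_{bc}g^{ca}$, and then sets $a=b$ so that both sides equal $\delta^a_a$, forcing $\lambda=\widetilde a\widetilde c+\widetilde a+\widetilde c$; this is exactly the reformulation $g^{ab}=(-1)^{\widetilde a\widetilde b+\widetilde a+\widetilde b}g^{ba}$ you identified as a consistency check. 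Your alternative invariant argument via $\flat$ and $\sharp$ is a nice conceptual remark but is not used in the paper.
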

\begin{proof}
Let $g^{ab} = (-1)^\lambda \, g^{ba} $, where $\lambda$ is to be determined. From the defining relation and the symmetry of the metric we have
$$g^{ac}g_{cb} = (-1)^{\widetilde{a} \widetilde{c} + \widetilde{a} \widetilde{b} + \widetilde{c} + \lambda} \, g_{bc}g^{ca}\, .$$
Then, once $a = b$ we see that $\lambda = \widetilde{a} \, \widetilde{c} + \widetilde{a} + \widetilde{c}$. This gives the required symmetry. 
\end{proof}
\begin{definition}
Let $(M,g)$ be a Riemannian supermanifold  we define the \emph{metric trace} or just \emph{trace} as the $C^\infty(M)$-linear map
$$\mathcal{A}^2(\sT M) \longrightarrow C^\infty(M)\,,$$
given in local coordinates as
$$\textnormal{Str}_g T := (-1)^{\widetilde{a}} \, g^{ab}T_{ba}\,,$$
for any arbitrary $T = \dot{x}^a\dot{x}^b T_{ba}(x) \in \mathcal{A}^2(\sT M)$. 
\end{definition}
In words, the metric trace is given by contraction of the rank two symmetric rank two covariant tensor with the inverse metric to form a matrix, and then we take the standard supertrace. 
\begin{remark}
The metric trace can also be defined for rank two covariant tensors without any symmetry condition. We focus on the symmetric case as this is what we will need in later sections of this paper. 
\end{remark}

\subsection{The divergence operator and  the canonical Berezin volume}
Let us for simplicity assume that the supermanifolds that we will be dealing with are superoriented (see \cite{Shander:1988} and/or \cite[page 285]{Duplij:2004}). That is the underlying reduced manifold will be oriented, and we further require that we have chosen an atlas such that the Jacobian associated with any change of coordinates is strictly positive. The \emph{Berezin bundle} $\Ber(M)$, is understood as the (even) line bundle over $M$ whose sections in a local trivialisation are of the form
$$ \Bs= D[x] s(x),$$ 
where $D[x]$ is the \emph{coordinate volume element}. Under changes of local coordinate  we have
$$D[x'] = D[x] \Ber\left(\frac{\partial x'}{\partial x}  \right).$$
Sections of $\Ber(M)$ are \emph{Berezin forms} on $M$. Note the the Grassmann parity of a Berezin density is determined by $s(x)$. A \emph{Berezin volume} on $M$ is a nowhere vanishing  even Berezin form. \par
In the classical case on a manifold, one needs a volume form (or in the non-oriented case a density) in order to define the divergence of a vector field. The same is true for supermanifolds, and we take the definition  of the \emph{divergence of a vector field} $X \in \Vect(M)$ with respect to a chosen Berezin volume to be
\begin{equation} \label{eqn:DivVect}
\p \: \textnormal{Div}_{\p}X = L_{X}\p\,.
\end{equation}
In local coordinates, this definition amounts to
\begin{equation}\label{eqn:DivLC}
\textnormal{Div}_{\p}X = (-1)^{\widetilde{a}(\widetilde{X}+1)} \: \frac{1}{\rho} \frac{\partial}{\partial x^{a}}\left(X^{a} \rho \right).
\end{equation}
Up to a sign factor, this local expression is exactly the same as the classical case. Moreover, one can show that the following expressions hold.

\begin{align*}
&\textnormal{Div}_{\p}(f\: X) = f \: \textnormal{Div}_{\p}X + (-1)^{\widetilde{f} \widetilde{X}} X(f);\\
&\textnormal{Div}_{\p'}X = \textnormal{Div}_{\p}X + X(f');\\
&\textnormal{Div}_{\p}[X,Y] = X(\textnormal{Div}_{\p} Y) -(-1)^{\widetilde{X} \widetilde{Y}}Y(\textnormal{Div}_{\p}X);
\end{align*}
where $X$ and $Y \in \Vect(M)$, $f \in C^{\infty}(M)$, and $\p' = \exp(f')\p$ with $f'\in C^{\infty}(M)$ is even. These properties, again up to some signs are identical to the properties of the classical divergence operator on a manifold. \par  
Much like the classical situation, a Riemannian metric defines  a canonical Berezin volume on $M$. This is well explained in \cite[Appendix B]{Voronov:2016} and our treatment of the construction is taken directly from there. The transformation rules for (components of) the metric can be written as 
\begin{align*}
g_{b'a'}(x') & = (-1)^{\widetilde{a}' \, \widetilde{b}} \left(\frac{\partial x^b}{\partial x^{b'}}\right) \left(\frac{\partial x^a}{\partial x^{a'}}\right)~  g_{ab}\\
 & = \left(\frac{\partial x^b}{\partial  x^{b'}}\right)g_{ba}\left(\frac{\partial x^a}{\partial  x^{a'}}\right)(-1)^{\widetilde{a}(\widetilde{a}' + 1)}.
\end{align*}
The third factor (along with the signs) is recognised as the supertranspose of the Jacobian matrix. Note that $\textnormal{Ber}(A^{st}) = \textnormal{Ber}(A)$. Thus, we obtain
\begin{align*}
\textnormal{Ber}(g_{b'a'}) & = \textnormal{Ber}\left(\frac{\partial x^b}{\partial x^{b'}} \right) ~ \textnormal{Ber}(g_{ba}) ~ \textnormal{Ber}\left(\frac{\partial x^a}{\partial x^{a'} }\right)\\
& = \textnormal{Ber}\left(\frac{\partial x^b}{\partial x^{b'}} \right)^2 ~~ \textnormal{Ber}(g_{ba}).
\end{align*}
Following classical notation, we set $|g| :=\textnormal{Ber}(g_{ba})$ and $ |g'| := \textnormal{Ber}(g_{b'a'})$,  and so we can write
$$|g'| = |g| ~~ \textnormal{Ber}\left( \frac{\partial x}{\partial x'}\right)^2.$$
\begin{definition}\label{def:ConBerVol}
Let $(M, g)$ be a Riemannian supermanifold. Then the \emph{canonical Berezin volume} is defined as
$$\rmd V :=  D[x]~  \sqrt{|g|},$$
where $|g| :=\textnormal{Ber}(g_{ba})$. 
\end{definition}
\begin{remark}
It should be noted that there is no canonical Berezin volume on an odd Riemannian supermanifold (or indeed, an odd symplectic supermanifold and this has important consequences for the Batalin--Vilkovisky formalism). The above considerations cannot be repeated for odd structures.
\end{remark}
In complete parallel with the classical case, the divergence of a vector field with respect to the canonical Berezin volume is related to the trace of the Lie derivative of the metric.
\begin{proposition}\label{prop:LieDiv}
Let $(M,g)$ be a Riemannian supermanifold and let $\rmd \textnormal{V}$ be the canonical Berezin volume. Then
$$\frac{1}{2} \textnormal{Str}_g\big(L_X g \big) = \textnormal{Div}_{\rmd \textnormal{V}}X\,.$$
\end{proposition}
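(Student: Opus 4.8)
The plan is to verify the identity in local coordinates by expanding both sides. For the right-hand side I would insert $\rho=\sqrt{|g|}$ into the local divergence formula \eqref{eqn:DivLC} and apply the graded Leibniz rule to $\partial/\partial x^a$, obtaining
$$\textnormal{Div}_{\rmd \textnormal{V}}X = (-1)^{\widetilde{a}(\widetilde{X}+1)}\,\frac{\partial X^a}{\partial x^a} \;+\; \frac{1}{2}\, X^a\,\frac{\partial \ln|g|}{\partial x^a}\,,$$
where the prefactor of the second term collapses to $+1$ because $\partial/\partial x^a$ shifts parity by $\widetilde a$ and $\widetilde a^2=\widetilde a$ in $\Z_2$. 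The first term is the ``naive'' divergence $\textnormal{Div}_{D[x]}X$, and the second is the correction produced by the density $\sqrt{|g|}$.

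I would then rewrite that correction using the standard variational formula for the Berezinian, $\partial_c \ln \Ber(G) = \textnormal{Str}\big(G^{-1}\partial_c G\big)$, applied to $G=(g_{ba})$. Written out in components this says precisely $\partial_c \ln|g| = \textnormal{Str}_g(\partial_c g)$, where $\partial_c g$ denotes the rank-two covariant tensor with components $\partial g_{ba}/\partial x^c$ and $\textnormal{Str}_g$ is as in the definition above. Hence the right-hand side equals $\textnormal{Div}_{D[x]}X + \tfrac12\, X^c\,\textnormal{Str}_g(\partial_c g)$.

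For the left-hand side I would substitute the local expression \eqref{eqnm:LieDerMet} for $(L_Xg)_{ba}$ into $\textnormal{Str}_g(L_Xg) = (-1)^{\widetilde a}g^{ab}(L_Xg)_{ba}$ and handle the three resulting terms one at a time. Throughout, the key simplifying fact is that $g^{ab}$ and $g_{ab}$ vanish unless $\widetilde a=\widetilde b$, which kills or trivialises most of the parity-dependent prefactors. For the first two terms I would contract out the inverse metric using the defining relation $g^{ac}g_{cb}=\delta^a_b$ together with the symmetries of $g$ and $g^{ab}$ recorded above; each term then reduces to $(-1)^{\widetilde a(\widetilde X+1)}\,\partial X^a/\partial x^a = \textnormal{Div}_{D[x]}X$. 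The third term contracts directly into $X^c\,\textnormal{Str}_g(\partial_c g)$. Adding up, $\tfrac12\,\textnormal{Str}_g(L_Xg) = \tfrac12\big(2\,\textnormal{Div}_{D[x]}X + X^c\,\textnormal{Str}_g(\partial_c g)\big)$, which is exactly the right-hand side computed above.

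The only real difficulty is the sign bookkeeping: one repeatedly has to commute the even objects $g^{ab}$ and $g_{ab}$ past the (possibly odd) quantities $X^a$ and $\partial X^c/\partial x^b$, and to convert signs such as $(-1)^{\widetilde b(\widetilde X+\widetilde a)}$ into usable ones by invoking $\widetilde a=\widetilde b$ on the support of the metric. The conceptual content that makes everything fit is that the two ``symmetric'' terms in $L_Xg$ contribute equally -- so together they absorb the factor $\tfrac12$ -- while the third, ``transport'' term is exactly what regenerates the logarithmic derivative of $|g|$ coming from the canonical density.
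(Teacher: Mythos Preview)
Your proposal is correct and follows essentially the same route as the paper: a local-coordinate computation that expands $\textnormal{Str}_g(L_Xg)$ using \eqref{eqnm:LieDerMet}, combined with the variational identity $\delta\ln\Ber(G)=\textnormal{Str}(G^{-1}\delta G)$ to match the $\sqrt{|g|}$ contribution in \eqref{eqn:DivLC}. The only organisational difference is that the paper works from the left-hand side and recognises the divergence at the end, whereas you expand both sides separately and compare; your explicit remark that the two ``symmetric'' terms of $L_Xg$ each contribute one copy of $\textnormal{Div}_{D[x]}X$ is a point the paper leaves implicit in its phrase ``direct computation''.
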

\begin{proof}
Direct computation in local coordinates produces
$$(-1)^{\widetilde{a}} \frac{1}{2} g^{ab}(L_X g)_{ba} = (-1)^{\widetilde{a}(\widetilde{X}+1)} \left (\frac{\partial X^a}{\partial x^a} \right) + \frac{1}{2} \left(X^c \frac{\partial g_{ab}}{\partial x^c} \right)g^{ba}\,.$$
Next, we need the well-known formula $\delta \textnormal{Ber}(A) = \textnormal{Ber}(A) \, \textnormal{Str}(\delta A \, A^{-1} )$, which implies
$$\frac{1}{2}\textnormal{Str}(\delta A \, A^{-1}) = \frac{1}{\sqrt{\textnormal{Ber}(A)}} ~ \delta \sqrt{\textnormal{Ber}(A)}\,.$$
Thus,
\begin{align*}
 (-1)^{\widetilde{a}} \frac{1}{2} g^{ab}(L_X g)_{ba} & = (-1)^{\widetilde{a}(\widetilde{X} + 1)}\left(\frac{\partial X^a}{\partial x^a}\right)  + \frac{1}{\sqrt{|g|}} ~  X^a \frac{\partial  \sqrt{|g|}}{\partial x^a}\\
 & = (-1)^{\widetilde{a}(\widetilde{X} + 1)} ~ \frac{1}{\sqrt{|g|}}~\left(\frac{\partial  }{\partial x^a}( X^a \sqrt{|g|})\right)\,.
\end{align*}
Comparing this with \eqref{eqn:DivLC} (and using Definition \ref{def:ConBerVol}) establishes the proposition. 
\end{proof}
\begin{proposition}\label{prop:KilDivLes}
Let $(M,g)$ be a Riemannian supermanifold. If $X\in \Vect(M)$ is a Killing vector field then it is divergenceless (with respect to the canonical Berezin volume). 
\end{proposition}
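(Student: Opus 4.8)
The plan is to deduce this immediately from the two results just established, namely Proposition \ref{prop:LieDiv} and the definition of a Killing vector field. First I would recall that if $X \in \Vect(M)$ is Killing, then by Definition \ref{def:KillVF} we have $L_X g = 0$. Consequently every component $(L_X g)_{ba}$ vanishes, and hence $\textnormal{Str}_g(L_X g) = (-1)^{\widetilde{a}}\, g^{ab}(L_X g)_{ba} = 0$ as well (the trace of the zero tensor is zero; there is nothing to compute).

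Next I would invoke Proposition \ref{prop:LieDiv}, which states that
$$\textnormal{Div}_{\rmd \textnormal{V}}X = \frac{1}{2}\,\textnormal{Str}_g\big(L_X g\big).$$
Substituting $L_X g = 0$ into the right-hand side gives $\textnormal{Div}_{\rmd \textnormal{V}}X = 0$, which is exactly the assertion that $X$ is divergenceless with respect to the canonical Berezin volume $\rmd \textnormal{V}$. That completes the argument.

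There is essentially no obstacle here: the proposition is a one-line corollary of Proposition \ref{prop:LieDiv}. The only thing worth a sentence of care is making explicit that the metric trace of the identically vanishing tensor is itself zero — this is immediate from the $C^\infty(M)$-linearity of $\textnormal{Str}_g$ (or simply from its local coordinate formula), so no sign bookkeeping or coordinate computation is needed. If one wanted, one could alternatively argue directly from $L_X (\rmd \textnormal{V}) = 0$: since $\rmd\textnormal{V} = D[x]\sqrt{|g|}$ and $L_X g = 0$ forces $L_X \sqrt{|g|}$-type terms to cancel, but this just reproves Proposition \ref{prop:LieDiv} in a special case, so the clean route is to cite it.
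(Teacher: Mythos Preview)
Your proof is correct and matches the paper's own argument: the paper simply states that the result is a direct consequence of Proposition~\ref{prop:LieDiv} together with Definition~\ref{def:KillVF}, which is exactly what you do.
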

\begin{proof}
This is a direct consequence of Proposition \ref{prop:LieDiv} together with Definition \ref{def:KillVF}.
\end{proof}
\begin{corollary}
The canonical Berezin volume on a Riemannian supermanifold $(M,g)$ is invariant under the action of a Killing vector field, i.e.,
$$L_X \rmd V =0\,,$$
if $X \in \Vect(M)$ is a Killing vector field.
\end{corollary}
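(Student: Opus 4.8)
The plan is to deduce the corollary directly from the divergence relation \eqref{eqn:DivVect}. Recall that the divergence of a vector field $X$ with respect to the canonical Berezin volume $\rmd V$ is defined precisely by $\rmd V \cdot \textnormal{Div}_{\rmd V} X = L_X \rmd V$. So the equation $L_X \rmd V = 0$ is logically equivalent to $\textnormal{Div}_{\rmd V} X = 0$, since $\rmd V$ is nowhere vanishing and even. Thus the corollary is just a reformulation of Proposition \ref{prop:KilDivLes}.

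First I would invoke Proposition \ref{prop:KilDivLes}: if $X$ is a Killing vector field then $\textnormal{Div}_{\rmd V} X = 0$. Then I would multiply this identity by $\rmd V$ and use the defining equation \eqref{eqn:DivVect} for the divergence, namely $L_X \rmd V = \rmd V \cdot \textnormal{Div}_{\rmd V} X$, to conclude $L_X \rmd V = 0$. That is essentially the whole argument; it is a one-line consequence.

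There is no real obstacle here — the substantive work (relating the trace of $L_X g$ to the divergence, and then using the Killing condition) has already been carried out in Propositions \ref{prop:LieDiv} and \ref{prop:KilDivLes}. If one preferred an even more self-contained route, one could instead chain the two displayed propositions together explicitly: $L_X \rmd V = \rmd V \cdot \textnormal{Div}_{\rmd V} X = \rmd V \cdot \tfrac12 \textnormal{Str}_g(L_X g) = 0$, where the middle equality is Proposition \ref{prop:LieDiv} and the last uses $L_X g = 0$ from Definition \ref{def:KillVF}. Either phrasing works; the only thing to be careful about is noting that $\rmd V$ being nowhere-vanishing is what lets the equivalence go both ways, though for the stated direction we only need the trivial implication $\textnormal{Div}_{\rmd V} X = 0 \Rightarrow L_X \rmd V = 0$.

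\begin{proof}
By Proposition \ref{prop:KilDivLes}, a Killing vector field $X$ satisfies $\textnormal{Div}_{\rmd V} X = 0$ with respect to the canonical Berezin volume. Substituting this into the defining relation \eqref{eqn:DivVect} for the divergence gives
$$L_X \rmd V = \rmd V \: \textnormal{Div}_{\rmd V} X = 0\,,$$
as claimed.
\end{proof}
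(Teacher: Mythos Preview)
Your proof is correct and matches the paper's approach: the corollary is stated there without a separate proof, as an immediate consequence of Proposition~\ref{prop:KilDivLes} together with the defining relation~\eqref{eqn:DivVect}. Your write-up makes this explicit and is exactly what is intended.
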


\section{Q-manifolds and their modular classes}\label{sec:QMod}
\subsection{Homological vector fields and Q-manifolds}
We now turn our attention to homological vector fields and Q-manifolds.
\begin{definition}
A \emph{Q-manifold}  is a supermanifold  $M$, equipped with a distinguished  odd vector field $Q\in \Vect(M)$  that `squares to zero', i.e., $Q^{2} = \half [Q,Q] =0$. The vector field $Q$ is referred to as a \emph{homological vector field}  or a \emph{Q-structure}.
\end{definition}
Note that due to extra signs that appear in supergeometry, $[Q,Q] := Q\circ Q + Q \circ Q$, and hence  $Q^2 =0$ is a non-trivial  condition. In local coordinates, we have $Q = Q^{a}(x)\frac{\partial}{\partial x^{a}}$, and the condition that $Q$ is homological is 
$$Q^{2} = 0    \Longleftrightarrow     Q^{a}\frac{\partial Q^{b}}{\partial x^{a}} =0.$$
\begin{definition}
Let $(M_{1}, Q_{1})$ and $(M_{2}, Q_{2})$ be Q-manifolds. Then a morphism of supermanifolds $\psi : M_{1} \rightarrow M_{2}$ is a \emph{morphism of Q-manifolds} if it relates the two  homological vector fields, i.e.,
$$Q_{1} \circ \psi^{*} = \psi^{*} \circ Q_{2}.$$
\end{definition}
To be explicit, let us employ local coordinates $x^{a}$ on $M_{1}$ and $y^{\alpha}$ on $M_{2}$. We will write, using standard abuses of notation $\psi^{*}y^{\alpha} = \psi^{\alpha}(x)$. The statement that $\psi$ be a morphism of Q-manifolds means locally that
$$Q_{1}^{a}(x)\frac{\partial \psi^{\alpha}(x)}{\partial x^{a}} = Q_{2}^{\alpha}(\psi(x)).$$
Evidently, we obtain the category of Q-manifolds via standard composition of supermanifold morphisms. 
\begin{definition}
The \emph{standard cochain complex} associated with a Q-manifold is the  $\Z_{2}$-graded cochain complex $(C^{\infty}(M), Q)$. The resulting cohomology is referred to as the \emph{standard cohomology}  of the Q-manifold.
\end{definition}
We then see that morphisms of Q-manifolds are cochain maps between the respective standard cochain complexes.
\begin{theorem}[Shander \cite{Shander:1980}]\label{thrm:Shander}
Let $Q$ be a homological vector field on a superdomain $\mathcal{U}^{p|q}$, then the following are equivalent:
\begin{enumerate}
\item $Q$ is weakly non-degenerate at all points $p \in \mathcal{U}^p$, i.e., not all the components of $Q$ vanish at any given point; 
\item there exists a coordinate system $(x^1 , \cdots, x^p; \zx^1, \cdots, \zx^q)$ on $\mathcal{U}^{p|q}$ such that
$$Q = \frac{\partial}{\partial \zx^1}\,.$$ 
\end{enumerate} 
\end{theorem}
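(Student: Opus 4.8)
The plan is to prove the "Shander normal form" theorem, which is essentially a super-version of the classical flow-box (straightening) theorem for non-vanishing vector fields. The implication $(2) \Rightarrow (1)$ is trivial: if $Q = \partial/\partial\zx^1$ in some coordinate system, then the component of $Q$ along $\zx^1$ is the constant $1$, which never vanishes, so $Q$ is weakly non-degenerate everywhere. The substance is $(1) \Rightarrow (2)$, and I would prove it by an inductive/bootstrapping argument, first treating the case where a component of $Q$ in the odd directions is a unit, then reducing the general case to this one.

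First I would observe that "weakly non-degenerate at $p$" means some $Q^a(p) \neq 0$ in $\cO_M(|U|)/\mathfrak{m}$-terms; more precisely, writing $Q = Q^\mu \partial_{x^\mu} + Q^i \partial_{\zx^i}$, either some even component $Q^\mu$ has a nonzero value at the reduced point, or some odd component $Q^i$ has a nonzero \emph{body} (its degree-zero-in-$\zx$ part evaluated at the point). The key case is when, after possibly relabelling and taking linear combinations of the odd coordinates, $Q^1$ (the $\partial_{\zx^1}$-component) is invertible on a neighbourhood. Then I would first rescale: replace $Q$ by $(Q^1)^{-1} Q$ — but this changes $Q$, so instead I should look for a coordinate change directly. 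The cleaner route: seek new coordinates $(\tilde x^\mu, \tilde\zx^i)$ with $\tilde\zx^1$ chosen so that $Q\tilde\zx^1 = 1$ and $Q\tilde x^\mu = Q\tilde\zx^j = 0$ for the remaining coordinates $j \neq 1$. Solving $Q f = c$ for prescribed $c$ is exactly solving a first-order PDE; because $Q^2 = 0$, these equations are consistent, and one can solve them order by order in the nilpotent (odd) variables, reducing at the bottom order to the classical existence theorem for ODEs/PDEs with a non-vanishing vector field. The homological condition $Q^2=0$ is what guarantees the formal obstruction vanishes and the construction of a full coordinate system (not just $\tilde\zx^1$) closes up.

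For the general case where no odd component is a unit but some even component $Q^\mu$ is nonzero at $p$, I would reduce to the previous case by a preliminary trick: the point is that $Q$ is odd, so one cannot have a purely even straightening; instead one notes that if $Q^\mu(p)\neq 0$ for some even $\mu$, then $Q^\mu$ must itself be an odd function, hence proportional to the $\zx$'s, so it vanishes at the reduced point — contradiction with the body being nonzero. Thus in fact weak non-degeneracy \emph{forces} some odd component $Q^i$ to have nonzero body, and we are always in the main case. This is the cleanest way to organise it: parity considerations show the only way $Q$ can be weakly non-degenerate at $p$ is via an odd-direction component, so after an $\cO_M$-linear (indeed, point-wise $\GL$) change of the odd frame we may assume $Q^1$ is invertible near $p$, and then proceed as above.

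The main obstacle is the inductive solution of the system $Q\tilde\zx^1 = 1$, $Q(\text{other coords}) = 0$: one must show these $p+q$ equations can be solved simultaneously by functionally independent functions forming a genuine coordinate system, and that the induction on the filtration by powers of the odd ideal terminates. The homological identity $Q^2=0$ enters precisely here to kill the obstruction cocycle at each step (so that $Qf=c$ is solvable whenever $Qc=0$), and the classical flow-box theorem supplies the base case. I would present this as: reduce to $Q^1$ invertible by parity + linear algebra on the odd frame; use $(Q^1)^{-1}$ to normalise; invoke the classical straightening theorem on the reduced level together with an induction in the odd variables (justified by $Q^2=0$) to complete the coordinate system; and finally check the Jacobian is invertible so this is indeed an admissible change of coordinates.
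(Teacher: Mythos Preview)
The paper does not give a proof of this theorem; it is quoted as a known result with a citation to Shander \cite{Shander:1980}, so there is nothing in the paper to compare your attempt against.

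Your outline is largely sound. The parity observation is exactly the right first move: since $Q$ is odd, each component $Q^\mu$ along an even coordinate direction is an odd function, hence lies in the ideal generated by the $\zx$'s and vanishes at every reduced point; weak non-degeneracy can therefore only be detected by some odd-direction component $Q^i$ having invertible body. After a linear change of the odd frame one may assume $Q^1$ is a unit, and then the filtered construction of coordinates with $Q\tilde\zx^1=1$ and $Q(\text{rest})=0$, using $Q^2=0$ to kill the obstruction at each stage, is a valid strategy.

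One correction, though: the base of that induction does \emph{not} use the classical flow-box theorem. As you yourself argue, the even components of $Q$ vanish identically on $|M|$, so there is no reduced vector field to straighten. The base step is pure linear algebra over $C^\infty(\mathcal{U}^p)$: at order zero in the $\zx$'s one has $Q \equiv \sum_i Q^i_0(x)\,\partial_{\zx^i}$ with $Q^1_0$ invertible, and the change $\tilde\zx^1=\zx^1/Q^1_0$, $\tilde\zx^j=\zx^j-(Q^j_0/Q^1_0)\zx^1$ for $j\ge 2$, $\tilde x^\mu=x^\mu$ already gives $Q\equiv\partial_{\tilde\zx^1}$ modulo higher order. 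No ODE theory is needed anywhere. (An alternative route, closer in spirit to Shander's original argument and to the classical proof, is to use the odd flow $\phi_\tau^*=\mathrm{id}+\tau Q$; the condition $Q^2=0$ is precisely what makes this an $\R^{0|1}$-action, and one then flows off a codimension-$(0|1)$ transversal.)
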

The above theorem tells us that locally and assuming that the homological vector field weakly non-degenerate on some appropriate neighbourhood,  then we can employ local coordinates $ x^a = (x^\mu , \zx^\lambda, \tau)$, where $\mu = 1, \cdots p$ and $\lambda = 1, \cdots, q-1$.  This theorem was extended by Vaintrob \cite{Vaintrob:1996} in the following way.
\begin{theorem}[Vaintrob \cite{Vaintrob:1996}]\label{thrm:Vaintrob}
Let $Q$ be a homological vector field on a supermanifold $M$. If $Q$ is non-singular (i.e., weakly non-degenerate in  neighbourhoods of any point on $|M|$), then there exists another a supermanifold $N$, such that
$$M \simeq N \times \R^{0|1}\,,$$ 
and the homological vector field takes the form 
$$Q = \frac{\partial}{\partial \tau}\,,$$
where $\tau$ is the global coordinate on $\R^{0|1}$. 
\end{theorem}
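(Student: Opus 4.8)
The plan is to leverage Shander's theorem (Theorem~\ref{thrm:Shander}) locally and then argue that the local trivializations can be patched into a global product decomposition $M \simeq N \times \R^{0|1}$. First I would invoke Shander's theorem at each point $p \in |M|$: since $Q$ is non-singular, in a neighbourhood $|U_p|$ there is a coordinate system in which $Q = \partial/\partial \tau$ for an odd coordinate $\tau$. The key observation is that $Q$ is odd and $Q^2 = 0$, so $Q$ generates an action of the odd abelian supergroup $\R^{0|1}$ on $M$; concretely, the flow of $Q$ with an odd time parameter $\theta$ (with $\widetilde{\theta} = 1$) is $\exp(\theta Q) = \mathrm{id} + \theta Q$, which is well-defined and globally so precisely because $(\theta Q)^2 = -\theta^2 Q^2 = 0$ makes the exponential series terminate. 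This gives a genuine (not merely local) $\R^{0|1}$-action on $M$.

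Next I would produce the quotient $N := M / \R^{0|1}$. The natural candidate for $\cO_N$ is the subsheaf of $Q$-invariant functions, i.e.\ $\Ker Q \subseteq \cO_M$; one checks this is a sheaf of supercommutative algebras and, using Shander's local normal form $Q = \partial/\partial\tau$, that locally $\Ker Q$ is exactly the functions independent of $\tau$, hence $N$ is locally a superdomain of dimension $(p|q-1)$ and is therefore a genuine supermanifold. The inclusion $\Ker Q \hookrightarrow \cO_M$ furnishes a surjective submersion $\pi : M \to N$. To build the isomorphism $M \simeq N \times \R^{0|1}$ one needs a global odd coordinate $\tau$ on $M$ with $Q\tau = 1$; then $(\pi, \tau) : M \to N \times \R^{0|1}$ is the desired isomorphism, intertwining $Q$ with $\partial/\partial\tau$. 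Such a global $\tau$ exists because the "affine" sheaf of local solutions to $Q\tau = 1$ is a torsor over $\Ker Q$ (any two solutions differ by an invariant function, and the body $|M|$, being a smooth manifold, admits partitions of unity so $H^1(|M|, \Ker Q) = 0$ in the relevant sense) — this is the standard argument that an affine bundle with soft structure sheaf of sections has a global section.

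The step I expect to be the main obstacle is precisely the passage from the local normal forms to the global splitting: one must verify that the transition functions between Shander charts respect the $\R^{0|1}$-structure sufficiently to glue, equivalently that the $\Ker Q$-torsor of global "time coordinates" $\tau$ is trivial. On a paracompact reduced manifold this follows from a partition-of-unity argument (softness of $\cO_M$, hence of $\Ker Q$ as a module sheaf), but it requires a little care because $\tau$ is odd and one is gluing affine, not linear, data; concretely, if $\tau_i$ on $|U_i|$ and $\tau_j$ on $|U_j|$ both satisfy $Q\tau = 1$, then $\tau_i - \tau_j \in \Ker Q(|U_{ij}|)$, and one averages with a partition of unity subordinate to $\{|U_i|\}$ to manufacture a global $\tau$. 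Once $\tau$ is in hand, setting $N$ to be the supermanifold with $|N| = |M|$ and $\cO_N = \Ker Q$, the map $(\pi,\tau)$ is readily checked in local coordinates to be an isomorphism of supermanifolds carrying $Q$ to $\partial/\partial\tau$, which completes the proof.
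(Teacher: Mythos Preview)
The paper does not supply a proof of this theorem: it is stated with attribution to Vaintrob and is immediately followed by the next subsection, so there is no argument in the paper to compare your proposal against. Your outline is the standard route to the result and is essentially sound: Shander's local normal form, the sheaf $\Ker Q$ as the structure sheaf of the putative quotient $N$, and a global odd function $\tau$ with $Q\tau = 1$ obtained by trivialising the $(\Ker Q)^{\mathrm{odd}}$-torsor of local solutions $\{\tau_i\}$.

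The one place that needs tightening is the softness claim. You write ``softness of $\cO_M$, hence of $\Ker Q$ as a module sheaf'', but a subsheaf of a soft sheaf need not be soft, and $\Ker Q$ is not an $\cO_M$-module (it is only a $(\Ker Q)^{\mathrm{even}}$-module). The correct justification is that $(\Ker Q)^{\mathrm{even}}$ surjects onto $C^\infty_{|M|}$, and using the Shander charts together with supports one can lift a smooth partition of unity on $|M|$ to one in $(\Ker Q)^{\mathrm{even}}$ (after dividing by the invertible sum $1 + \text{nilpotent}$). This exhibits $(\Ker Q)^{\mathrm{odd}}$ as a module over a fine sheaf of rings on a paracompact base, hence acyclic, so the \v{C}ech class $[\tau_i - \tau_j] \in H^1(|M|, (\Ker Q)^{\mathrm{odd}})$ vanishes and a global $\tau$ exists. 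With that adjustment your argument goes through.
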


\subsection{Modular classes of Q-manifolds}
The modular class of a Q-manifold (\cite{Lyakhovich:2004,Lyakhovich:2010}) is defined in terms of the divergence (see \ref{eqn:DivVect}) of the homological vector field.
\begin{definition}
The \emph{modular class} of a Q-manifold  is the standard cohomology class of $\textnormal{Div}_{\p}Q$, i.e., 
$$\textnormal{Mod}(Q) := [\textnormal{Div}_{\p}Q]_{\textnormal{St}}.$$
\end{definition}
The modular class is independent of any chosen Berezin volume as any other choice of volume leads to divergences that differ only by something Q-exact, and so Q-closed (this follows directly from the properties of the divergence operator). This means that the modular class is a characteristic class of a Q-manifold.  The vanishing of the modular class is a necessary and sufficient condition for the existence of a Berezin volume that is Q-invariant. \par 
In some given set of local coordinates, one can write out the divergence as
$$\textnormal{Div}_{\p}Q =  \frac{\partial Q^{a}}{\partial x^{a}} + Q(\log(\rho)).$$
The \emph{local (characteristic) representative} of the modular class is understood as just the term
\begin{equation}\label{eqn:localrep}
\phi_{Q}(x) := \frac{\partial Q^{a}}{\partial x^{a}}(x).
\end{equation}
In general, this term is \emph{not} invariant under changes of coordinates, only the full expression for the divergence is.  However, as we are always dropping terms that are Q-exact, the local representative is still meaningful, though as written it is only a local function on $M$.
\begin{remark}
The expression (\ref{eqn:localrep}) gives the local representative of the standard (coordinate) volume. In general we do not have a version of the Poincar\'{e} lemma: meaning that Q-closed functions are \emph{not} necessarily locally Q-exact. Thus, it makes sense to speak of a local representative of the modular class.
\end{remark}
\begin{definition}\label{def:UniMod}
A Q-manifold $(M,Q)$ is said to be a \emph{unimodular Q-manifold} if its modular class vanishes. In other words, if there exists a Q-invariant Berezinian volume.
\end{definition}
\begin{example}
The prototypical example of a Q-manifold is the antitangent bundle $\Pi \sT M$. In natural local coordinates $(x^a, \rmd x^b)$, we have the de Rham differential
$$\rmd = \rmd x^a \frac{\partial}{\partial x^a}\,.$$
Clearly, the local representative of the modular class vanishes and so $\Pi \sT M$ is unimodular. The invariant Berezin volume is just the canonical coordinate volume $D[x, \rmd x]$.
\end{example}

\section{Riemannian Q-manifolds}\label{sec:RiemQ}
\subsection{Homological-Killing vector fields}
If a supermanifold is both simultaneously  a Riemannian supermanifold and a Q-manifold, we have the natural question of the compatibility of the two structures. In practice, this often reduces to one structure generating a symmetry of the other and maybe vice-versa. We, therefore, make the following definition.
\begin{definition}
Let $(M,g)$ be a Riemannian supermanifold. Then a \emph{homological-Killing vector field} $Q \in \Vect(M)$ is a homological vector field that is also a Killing vector field. That is, it satisfies
\begin{enumerate}
\itemsep0.5em
\item $Q^2 = \frac{1}{2}[Q,Q] =0$, and,
\item $L_Qg =0$.
\end{enumerate}
\end{definition}
\begin{remark}
The standard cohomology of a Q-manifold can be extended to all tensor fields on $(M,Q)$ via the Lie derivative. In particular, $(\mathcal{A}^2(\sT M), L_Q)$ is a $\Z_2$-graded cochain complex. Thus, the Killing condition of a homological vector field can be restated as the metric $g$ being $Q$-closed.
\end{remark}
\begin{definition}
A \emph{Riemannian Q-manifold} is a triple $(M, g, Q)$, where $(M,g)$ is a Riemannian manifold, $(M,Q)$ is a Q-manifold such that $Q$ is a homological-Killing vector field.
\end{definition}

\begin{example}[Euclidean superspace] Consider $\R^{1|2}$ equipped with global coordinates $(t, \zx^1, \zx^2)$ and with standard metric
$$g =  \dot{t}^2 \pm 2 \, \dot{\zx}^1 \dot{\zx}^2.$$
This metric is clearly invariant under translations of any of the even or odd directions. We may take 
$$Q = \frac{\partial}{\partial \zx^1}$$
as our distinguished homological-Killing vector field in this particular chart.
\end{example}
\begin{example}[Positive half-superline] Consider $\R^{1|2}$ equipped with global coordinates $(t, \zx^1, \zx^2)$. The positive half-superline  $\R^{1|2}_{>0}$ we define to be the open subsupermanifold of $\R^{1|2}$ defined by $t >0$. We equip the positive half-superline with the metric
$$g =  (\dot{t}^2 \pm 2 \, \dot{\zx}^1 \dot{\zx}^2) ~t^{-2}.$$
This metric is clearly invariant under translation in either of the odd directions. However, unlike the previous example, it is not invariant under translations in the even direction.  We may take 
$$Q = \frac{\partial}{\partial \zx^1}$$
as our distinguished homological-Killing vector field in this particular chart.
\end{example}
We will shortly see that the above examples are somewhat generic (see Proposition \ref{prop:LocForm} and Corollary \ref{cor:LocForm}).
\begin{definition}
A \emph{morphism between two Riemannian Q-manifolds}
$$\phi :  (M, g, Q) \longrightarrow  (m', g', Q'),$$
is a morphism of supermanifolds such that
\begin{enumerate}
\itemsep0.5em
\item $\phi^* g' = g$, and,
\item $Q \circ \phi^* = \phi^* \circ Q'.$
\end{enumerate}
\end{definition}
 In local coordinates the two above condition can be written in the following way. If we consider local coordinates $x^a$ on $M$ and $y^\alpha$ on $M'$, and then denote $\phi^*y^\alpha = \phi^\alpha(x)$, then  we can write
 \begin{align*}
&  (-1)^{\widetilde{a} \, \widetilde{\alpha}} \left(\frac{\partial \phi^\beta(x)}{\partial x^b} \right) \left(\frac{\partial \phi^\alpha(x)}{\partial x^a} \right)g_{\alpha\beta}(\phi(x)) = g_{ab}(x),\\
& Q^a(x)\left(\frac{\partial \phi^\alpha(x)}{\partial x^a} \right) = Q^\alpha(\phi(x)).
\end{align*}
  One can quickly see that morphisms between Riemannian Q-manifolds can be composed (as morphisms between supermanifolds) and that in this way we obtain the category of Riemannian Q-manifolds.

\subsection{The modular class of a Riemannian Q-manifold}
We are now in a position to state the following.
\begin{theorem}
Let $(M, g , Q)$ be a Riemannian Q-manifold. Then as a Q-manifold, $(M,Q)$ is unimodular (see Definition \ref{def:UniMod}).
\end{theorem}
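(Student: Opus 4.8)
The plan is to observe that the theorem is an immediate consequence of the material assembled in Section \ref{sec:RiemSup}, together with the definition of the modular class. First I would recall that $\textnormal{Mod}(Q) = [\textnormal{Div}_{\p}Q]_{\textnormal{St}}$ is the standard cohomology class of the divergence of $Q$ computed with respect to \emph{any} Berezin volume $\p$, the resulting class being independent of that choice. Since $(M,g)$ is a Riemannian supermanifold it has even dimension $n|2p$ and, under the standing superorientability assumption, carries the canonical Berezin volume $\rmd V = D[x]\sqrt{|g|}$ of Definition \ref{def:ConBerVol}. The natural move is therefore to evaluate the divergence of $Q$ with respect to this distinguished volume rather than an arbitrary one.

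Next I would invoke the hypothesis that $Q$ is a homological-Killing vector field, so in particular $L_Q g = 0$. Proposition \ref{prop:LieDiv} identifies $\textnormal{Div}_{\rmd V}Q$ with $\frac{1}{2}\,\textnormal{Str}_g\big(L_Q g\big)$, and Proposition \ref{prop:KilDivLes} packages this into the statement that a Killing vector field is divergenceless with respect to $\rmd V$; either way, $\textnormal{Div}_{\rmd V}Q = 0$. Hence $\textnormal{Mod}(Q) = [0]_{\textnormal{St}} = 0$, which is exactly the condition for $(M,Q)$ to be unimodular (Definition \ref{def:UniMod}). Equivalently, since $L_Q \rmd V = \rmd V\,\textnormal{Div}_{\rmd V}Q = 0$, the canonical Berezin volume $\rmd V$ is itself a $Q$-invariant Berezin volume, exhibiting unimodularity directly and constructively.

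I do not expect any real obstacle here: all the substantive work was carried out in establishing Proposition \ref{prop:LieDiv}, namely the local computation relating $\textnormal{Str}_g\big(L_X g\big)$ to the divergence taken with respect to $\sqrt{|g|}$. The only point deserving a word of caution is the superorientability assumption needed to interpret $\rmd V$ as a bona fide Berezin volume; in the non-oriented setting one simply replaces $\rmd V$ by the associated density and runs the identical argument, since the divergence operator and the modular class are defined there as well, so the conclusion is unaffected.
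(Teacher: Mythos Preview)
Your proposal is correct and follows essentially the same route as the paper: use the canonical Berezin volume $\rmd V$, invoke Proposition~\ref{prop:KilDivLes} (itself a consequence of Proposition~\ref{prop:LieDiv}) to conclude that the Killing vector field $Q$ satisfies $\textnormal{Div}_{\rmd V}Q = 0$, and hence that $L_Q\,\rmd V = 0$, exhibiting a $Q$-invariant Berezin volume. Your additional remark about the superorientability hypothesis and the density variant is a reasonable clarification but not something the paper addresses.
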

\begin{proof}
From Proposition \ref{prop:KilDivLes} we see that any Killing vector field has vanishing divergence with respect to the canonical Berezin volume.   From the definition of the divergence,  it is clear that $L_Q \rmd \mathrm{V} =0$. Moreover, the existence of a $Q$-invariant Berezin volume  is equivalent to the vanishing of the modular class. Hence, the Q-manifold $(M,Q)$ is unimodular. 
\end{proof}
\begin{remark}
It is clear that not all unimodular Q-manifolds can be equipped with a Riemannian metric that renders them a Riemannian Q-manifold. For one, we require the dimension of the supermanifold to be $n|2 \,p$. This immediately rules out the possibility of constructing a Riemannian metric on $\Pi \sT M$ such that the de Rham differential $\rmd$ is a Killing vector field. However, it is known that odd Riemannian metrics exist for which the de Rham differential is Killing. See Monterde \& S\'{a}nchez-Valenzuela \cite{Monterde:1996} for details. 
\end{remark}
\begin{example}
Let $(\mathfrak{g}, [-,-]$ be a (non-super) Lie algebra of dimension $2p$. Furthermore, let us assume that this Lie algebra comes equipped with an almost symplectic structure, i.e., a Lie algebra two form of maximal rank, which is not necessarily closed with respect to the Chevalley--Eilenberg differential. Let us now pass to the ``super-picture''.  As standard, $\Pi \mathfrak{g}$ is a Q-manifold, were, in natural linear coordinates, the homological vector field is
$$Q = \frac{1}{2}\zx^\alpha \zx^\beta Q_{\beta \alpha}^\gamma \frac{\partial}{\partial \zx^\gamma},$$
here $Q_{\beta \alpha}^\gamma$ are the structure constants of the Lie algebra.  The Jacobi identity for the Lie bracket is equivalent to $Q^2 =0$. The almost symplectic structure we can interpret as a Riemannian metric on $\Pi \mathfrak{g}$,
$$g = \dot{\zx}^\alpha \dot{\zx}^\beta g_{\beta \alpha}\,,$$
where $g_{\beta \alpha} = - g_{\alpha \beta}$. The Killing equation reduces to the algebraic condition
\begin{equation}\label{eqn:LieKillEq}
Q^\gamma_{\delta \alpha}g_{\gamma \beta} - Q^\gamma_{\delta \beta}g_{\gamma \alpha} =0.
\end{equation}
If \eqref{eqn:LieKillEq} holds, then $(\Pi \mathfrak{g},Q)$ is a Riemannian Q-manifold.  Assuming that this is the case, then $\mathfrak{g}$ is a unimodular Lie algebra in the classical sense. Note that the Killing equation is a more restrictive condition that just unimodularity of the Lie algebra.  To see the classical unimodularity, consider contraction of the Killing equation
\begin{equation*}
Q^\gamma_{\delta \alpha}g_{\gamma \beta}g^{\beta \epsilon} - Q^\gamma_{\delta \beta}g_{\gamma \alpha}g^{\beta \epsilon}  = Q^\epsilon_{\delta \alpha}- Q^\gamma_{\delta \beta}g_{\gamma \alpha}g^{\beta \epsilon} =0.
\end{equation*}
Now setting $ \epsilon = \alpha$, as this is what we are interested in when it comes to unimodularity, gives
\begin{equation*}
Q^\alpha_{\delta \alpha} -  Q^\gamma_{\delta \beta}g_{\gamma \alpha}g^{\beta \alpha}  = Q^\alpha_{\delta \alpha} + Q^\gamma_{\delta \beta}g^{\beta \alpha}g_{\alpha \gamma}=  Q^\alpha_{\delta \alpha}+ Q^\beta_{\delta \beta} =0.
\end{equation*}
Thus, $Q^{\alpha}_{\beta \alpha} =0$, which is precisely the condition that $\mathfrak{g}$ be unimodular.
\end{example}

\subsection{Killing--Shander coordinates}
 Assuming that $Q$ is weakly non-degenerate in the neighbourhood of a point $p \in |U|$, then using Theorem \ref{thrm:Shander}, we can employ local coordinates $x^a = (x^i , \tau)$. In these privileged coordinates, the Killing equation  (see Definition \ref{def:KillVF} and \eqref{eqnm:LieDerMet}) reduces to 
 \begin{equation}\label{eqn:RedKillEQ}
 \frac{\partial g_{ba}}{\partial \tau} = 0.
 \end{equation}
 We will refer to this choice of coordinates as \emph{Killing--Shander coordinates}. Thus we are lead to the following:
 \begin{proposition}\label{prop:LocForm}
 Let $(M, g,Q)$ be a Riemannian Q-manifold.  In the neighbourhood of a point $p \in |U| \subset |M|$ on which $Q$ is weakly non-degenerate, there exists coordinates $x^a := (x^i , \tau)$ such that all the components of the Riemannian metric are independent of $\tau$. Conversely, if in the neighbourhood of any point on $|M|$ there exists coordinates $x^a := (x^i , \tau)$ such that all the components of the Riemannian metric are independent of $\tau$, then there exists a nowhere vanishing homological vector field $Q$.
 \end{proposition}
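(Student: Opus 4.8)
The plan is to prove both implications by substituting the distinguished coordinate frame into the general local formula \eqref{eqnm:LieDerMet} for $L_Xg$, so that the Killing equation collapses to the triviality \eqref{eqn:RedKillEQ}.

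For the forward implication I would first invoke Shander's Theorem~\ref{thrm:Shander}: since $Q$ is weakly non-degenerate near $p$, there are coordinates $x^a=(x^i,\tau)$ on a neighbourhood of $p$, with $\tau$ odd, in which $Q=\partial/\partial\tau$, so that $Q^\tau=1$, $Q^i=0$ and every component of $Q$ is constant. Then I would put $X=Q$ (hence $\widetilde Q=1$) in \eqref{eqnm:LieDerMet}: the two terms carrying $\partial Q^c/\partial x^b$ and $\partial Q^c/\partial x^a$ vanish identically, leaving $(L_Qg)_{ba}=(-1)^{\widetilde a+\widetilde b}\,\partial g_{ba}/\partial\tau$. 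Since $Q$ is Killing, $L_Qg=0$, so $\partial g_{ba}/\partial\tau=0$ for all $a,b$: the metric components are $\tau$-independent, which is exactly \eqref{eqn:RedKillEQ}, i.e.\ we are in Killing--Shander coordinates.

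For the converse I would run the same computation in reverse. Given a chart with coordinates $(x^i,\tau)$, $\tau$ odd, in which every $g_{ba}$ is independent of $\tau$, set $Q:=\partial/\partial\tau$ on that chart. This $Q$ is odd and manifestly nowhere vanishing, and it is homological because every function is at most linear in the odd coordinate $\tau$, so two successive $\tau$-derivatives annihilate it and $Q^2=\frac{1}{2}[Q,Q]=0$ --- the same fact that makes $\partial/\partial\tau$ homological in Vaintrob's Theorem~\ref{thrm:Vaintrob}. The displayed identity above moreover gives $(L_Qg)_{ba}=(-1)^{\widetilde a+\widetilde b}\,\partial g_{ba}/\partial\tau=0$, so $Q$ is actually a homological-Killing vector field, slightly more than what is claimed.

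The part I expect to be delicate is the \emph{scope} of the converse. On each individual chart the construction is immediate, but obtaining a single globally defined nowhere-vanishing homological vector field from the assumption that such charts exist around every point of $|M|$ requires gluing the local models $\partial/\partial\tau_\alpha$, and a partition-of-unity combination of them is in general neither homological nor nonvanishing. By Theorem~\ref{thrm:Vaintrob} a global such $Q$ is equivalent to a splitting $M\simeq N\times\R^{0|1}$, so an honest global statement needs that extra structural input (presumably supplied by Corollary~\ref{cor:LocForm}); read as a local normal-form result, the proposition is proved by the two short specializations of \eqref{eqnm:LieDerMet} above.
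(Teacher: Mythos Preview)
Your argument is correct and is essentially the paper's own proof: both directions amount to specializing \eqref{eqnm:LieDerMet} in Shander coordinates so that the Killing condition becomes \eqref{eqn:RedKillEQ}, and for the converse one simply takes $Q=\partial/\partial\tau$ in each such chart. Your caveat about the global scope of the converse is well taken, but it is not a defect of your proof relative to the paper's: the paper's proof likewise works chart-by-chart and does not supply a gluing argument, so the proposition is best read as the local normal-form statement you describe.
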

 \begin{proof}
 The first part of the proposition is a direct consequence of \eqref{eqn:RedKillEQ}.  The converse statement follows as in the  given coordinate systems  $Q = \frac{\partial}{\partial \tau}$ is clearly homological and Killing. The homological Killing vector field must be nowhere vanishing in order for the required coordinates to exists in the neighbourhood of any point. 
 \end{proof}
 \begin{corollary}\label{cor:LocForm}
 With the conditions of the previous proposition in place,  in Killing--Shander coordinates the metric  has the form
 $$g = \dot{x}^i \dot{x}^j g_{ji}(x)  +  \dot{\tau} \dot{x}^i g_i(x)\, ,$$
 and the homological Killing vector field has the form
 $$ Q = \frac{\partial}{\partial \tau}\,. $$
 \end{corollary}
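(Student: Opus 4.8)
The plan is to expand the general local formula $g = \dot{x}^a \dot{x}^b g_{ba}(x)$ in Killing--Shander coordinates $x^a = (x^i,\tau)$ and discard the terms that vanish for parity reasons. To begin, recall that the coordinate $\tau$ produced by Theorem~\ref{thrm:Shander} is odd and that $Q = \frac{\partial}{\partial\tau}$; both facts are built into the definition of Killing--Shander coordinates, so the asserted form of $Q$ requires no further proof. Splitting the index $a$ into the collective index $i$ (the remaining coordinates) and the distinguished odd index $\tau$, the degree-two function $g$ becomes the sum of $\dot{x}^i\dot{x}^j g_{ji}$, the two mixed contributions $\dot{x}^i\dot\tau\, g_{\tau i}$ and $\dot\tau\,\dot{x}^j g_{j\tau}$, and the pure contribution $\dot\tau\,\dot\tau\, g_{\tau\tau}$.

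The key observation is that the pure $\tau$-contribution is identically zero: since $\tau$ is odd, so is the fibre coordinate $\dot\tau$ on $\sT M$, whence $(\dot\tau)^2 = 0$; equivalently, the symmetry $g_{ab} = (-1)^{\widetilde{a}\,\widetilde{b}} g_{ba}$ forces $g_{\tau\tau} = -g_{\tau\tau} = 0$. For the two mixed terms I would use this symmetry of the metric together with the reordering rule $\dot{x}^i\dot\tau = (-1)^{\widetilde{i}}\dot\tau\,\dot{x}^i$ to check that they are in fact equal, and then collect the resulting numerical and sign factors into a single new family of components $g_i(x)$, producing precisely the term $\dot\tau\,\dot{x}^i g_i(x)$. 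Finally, $\tau$-independence of both $g_{ji}$ and $g_i$ is exactly the content of the reduced Killing equation \eqref{eqn:RedKillEQ} guaranteed by Proposition~\ref{prop:LocForm}, and this completes the argument.

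Since this is pure bookkeeping, I do not anticipate a genuine obstacle; the only points demanding care are the sign accounting that identifies the two mixed terms and the verification that the $g_{\tau\tau}$ piece really drops out rather than being merely constrained. One could note as an aside that non-degeneracy of $g$ then imposes further conditions linking $g_i$ and $g_{ji}$ --- in particular $g_i$ cannot be identically zero if $\dot\tau$ is to appear at all in a non-degenerate pairing --- but this is not needed for the stated form.
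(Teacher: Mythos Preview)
Your argument is correct and is exactly the expected unpacking: the paper states this corollary without proof, treating it as an immediate consequence of Proposition~\ref{prop:LocForm} together with Shander's form $Q = \partial/\partial\tau$, and your bookkeeping (the vanishing of $(\dot\tau)^2$ since $\tau$ is odd, the merging of the two mixed terms via the symmetry $g_{ab} = (-1)^{\widetilde a\,\widetilde b} g_{ba}$, and the $\tau$-independence from \eqref{eqn:RedKillEQ}) is precisely what one writes down to justify it. There is nothing to add.
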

Using Theorem \ref{thrm:Vaintrob}, if we have a nowhere vanishing homological vector field, then we can consider $M \simeq N \times \R^{0|1}$ as a trivial odd line bundle. Thus, changes of Killing--Shander coordinates are of the form 
\begin{align*}
x^{i'} = x^{i'}(x), && \tau' = c \, \tau,
\end{align*}
where $c \in \R_{*}$. The naturally induced changes of coordinates on the tangent bundle are 
\begin{align*}
\dot{x}^{i'} =\dot{x}^i \left(\frac{ \partial x^{i'}}{\partial x^i}\right), && \dot{\tau}' = c \, \dot{\tau}.
\end{align*}
Then, examining the local form of the metric show that the term $\dot{x}^i \dot{x}^j g_{ji}(x)$ belongs to $\mathcal{A}^2(\sT N)$. However, it is \emph{not} a Riemannian metric as $N$ has an odd number of `odd directions', i.e., locally we have an odd number of anticommuting coordinates. Examining the second term, we see that we have the transformation rule
$$g_{i'} = c^{-1}\left( \frac{\partial x^i}{\partial x^{i'}}\right)g_i\,,$$
and this  term as the interpretation (under the specified coordinate changes) as an odd twisted covariant one-form. Under these transformations, the homological vector field transforms by an  irrelevant rescaling by $c^{-1}$, i.e., simply rescaling the odd coordinates again will remove this factor.

\section{Concluding remarks}\label{sec:Conc}
We have shown, rather explicitly,  that Riemannian Q-manifolds represent a large class of unimodular Q-manifolds, i.e., supermanifolds that admit a Q-invariant Berezin volume.  The  Q-invariant volume is just the canonical Berezin volume associated with the (even) Riemannian metric.   If instead of an even metric one considers an odd metric, then a Berezin volume needs to be separately specified. Thus, in general, a Killing vector field on an odd Riemannian supermanifold does not automatically preserve the volume. This is, of course, in complete parallel with the case of even and odd symplectic supermanifolds and Hamiltonian vector fields.\par 
It would be interesting to construct further examples of Riemannian Q-manifolds and examine the interplay between their standard cohomology and their Riemannian geometry.  To the author's knowledge there has been no published works in this direction.

\section*{Acknowledgements}
The author cordially thanks Steven Duplij for his helpful comments on an earlier draft of this work.


\begin{thebibliography}{10}
\begin{small}

\bibitem{Alexandrov:1997}
Alexandrov, M., Schwarz, A., Zaboronsky, O., Kontsevich, M.,
The geometry of the master equation and topological quantum field theory, \href{https://doi.org/10.1142/S0217751X97001031}{\emph{Internat. J. Modern Phys. A}} \textbf{12} (1997), no. 7, 1405--1429, \href{https://arxiv.org/abs/hep-th/9502010}{arXiv:hep-th/9502010}.

\bibitem{Berezin:1976}
Berezin, F.A., Leites, D.A.,
 Supermanifolds, \href{http://www.mathnet.ru/php/archive.phtml?wshow=paper&jrnid=dan&paperid=39282&option_lang=eng}{\emph{Soviet Math. Dokl.}} \textbf{16} (1975), no. 5, 1218–1222 (1976).

\bibitem{Bruce:2017}
 Bruce, A.J., Modular classes of Q-manifolds: a review and some applications, \href{https://www.emis.de/journals/AM/17-4/am2796.html}{\emph{Arch. Math. (Brno)}} \textbf{53} (2017), no. 4, 203--219, \href{https://arxiv.org/abs/1705.03323}{arXiv:1705.03323}. 

\bibitem{Carmeli:2011}
Carmeli, C., Caston L., Fioresi, R., \href{http://www.ems-ph.org/books/book.php?proj_nr=132}{Mathematical foundations of supersymmetry}, EMS Series of Lectures in Mathematics, \emph{European Mathematical Society} (EMS), Z\"{u}rich, 2011. xiv+287 pp. ISBN: 978-3-03719-097-5


\bibitem{DeWitt:1992}
DeWitt, B.,
\href{https://doi.org/10.1017/CBO9780511564000}{Supermanifolds},
Second edition, Cambridge Monographs on Mathematical Physics. \emph{Cambridge University Press}, Cambridge, 1992. xviii+407 pp. ISBN: 0-521-41320-6; 0-521-42377-5 

\bibitem{Duplij:2004}
Duplij S., Siegel W., Bagger J. (editors),
\href{https://doi.org/10.1007/1-4020-4522-0}{Concise encyclopedia of supersymmetry and noncommutative structures in mathematics and physics}, \emph{Kluwer Academic Publishers}, Dordrecht, 2004. iv+561 pp. ISBN: 1-4020-1338-8 .

\bibitem{Evans:1999}
Evens, S., Lu J.H., Weinstein, A.,
Transverse measures, the modular class and a cohomology pairing for Lie algebroids.
\href{https://doi.org/10.1093/qjmath/50.200.417}{\emph{Quart. J. Math. Oxford Ser. (2)}} \textbf{50} (1999), no. 200, 417--436, \href{https://arxiv.org/abs/dg-ga/9610008}{arXiv:dg-ga/9610008}.

\bibitem{Galaev:2012}
Galaev, A.S.,
Irreducible holonomy algebras of Riemannian supermanifolds, 
\href{https://doi.org/10.1007/s10455-011-9299-4}{\emph{Ann. Global Anal. Geom.}} \textbf{42} (2012), no. 1, 1--27, \href{https://arxiv.org/abs/0906.5250}{arXiv:0906.5250}.


\bibitem{Garnier:2014}
 Garnier S.,  Kalus M., 
 A lossless reduction of geodesics on supermanifolds to non-graded differential geometry,
 \href{http://dx.doi.org/10.5817/AM2014-4-205}{\emph{Arch. Math. (Brno)}} \textbf{50} (2014), no. 4, 205--218, \href{https://arxiv.org/abs/1406.5870}{arXiv:1406.5870 }.


\bibitem{Garnier:2012}
Garnier,S., Wurzbacher, T.,
The geodesic flow on a Riemannian supermanifold,
\href{https://doi.org/10.1016/j.geomphys.2012.02.002}{\emph{J. Geom. Phys.} }\textbf{62} (2012), no. 6, 1489--1508, \href{https://arxiv.org/abs/1107.1815}{arXiv:1107.1815}.

\bibitem{Goertsches:2008}
Goertsches, O.,
Riemannian supergeometry,
\href{https://doi.org/10.1007/s00209-007-0288-z}{\emph{Math. Z.}} \textbf{260} (2008), no. 3, 557--593, \href{https://arxiv.org/abs/math/0604143}{arXiv:math/0604143}.


 \bibitem{Grabowski:2012}
 Grabowski, J., Rotkiewicz, M.,
Graded bundles and homogeneity structures,
\href{https://doi.org/10.1016/j.geomphys.2011.09.004}{\emph{J. Geom. Phys.}} \textbf{62} (2012), no. 1, 21--36, \href{https://arxiv.org/abs/1102.0180}{arXiv:1102.0180}.
 
 
 \bibitem{Groeger:2014}
 Groeger, J.,
Killing vector fields and harmonic superfield theories,
\href{https://doi.org/10.1063/1.4895464}{\emph{J. Math. Phys.}} \textbf{55} (2014), no. 9, 093503, 17 pp, \href{https://arxiv.org/abs/1301.5474}{arXiv:1301.5474}.

\bibitem{Klaus:2019}
 Kalus M., 
 Non-split almost complex and non-split Riemannian supermanifolds,
 \href{http://dx.doi.org/10.5817/AM2019-4-229}{\emph{Arch. Math. (Brno)}} \textbf{55} (2019), no. 4, 229--238, \href{https://arxiv.org/abs/1501.07117}{arXiv:1501.07117}.
 
 \bibitem{Klinker:2005}
Klinker, F.,
Supersymmetric Killing structures,
\href{https://doi.org/10.1007/s00220-004-1277-2}{\emph{Comm. Math. Phys.}} \textbf{255} (2005), no. 2, 419--467, \href{https://arxiv.org/abs/2001.03239}{arXiv:2001.03239}.
 
\bibitem{Manin:1997}
Manin, Y.I., \href{http://link.springer.com/book/10.1007\%2F978-3-662-07386-5}{Gauge field theory and complex geometry}, Second edition, Fundamental Principles of Mathematical Sciences, 289. \emph{Springer-Verlag}, Berlin, 1997. xii+346 pp. ISBN: 3-540-61378-1 

\bibitem{Mosman:2011}
Mosman, E.A., Sharapov, A.A.,
Quasi-Riemannian structures on supermanifolds and characteristic classes,  \href{https://doi.org/10.1007/s11182-011-9667-3}{\emph{Russian Phys. J.}} \textbf{54} (2011), no. 6, 668--672. 

\bibitem{Leites:1980}
Leites D.A.,  Introduction to the theory of supermanifolds,   \href{https://doi.org/10.1070/RM1980v035n01ABEH001545}{\emph{Russ. Math. Surv.}} \textbf{35}  (1980), no. 1, 1--64.


\bibitem{Lyakhovich:2004}
Lyakhovich, S.L., Sharapov, A.A., Characteristic classes of gauge systems,         \href{http://dx.doi.org/10.1016/j.nuclphysb.2004.10.001}{\emph{Nuclear Phys. B}} \textbf{703} (2004), no. 3, 419--453., \href{https://arxiv.org/abs/hep-th/0407113v2}{arXiv:hep-th/0407113}. 


\bibitem{Lyakhovich:2010}
Lyakhovich, S.L., Mosman, E.A., Sharapov, A.A.,
 Characteristic classes of Q-manifolds: classification and applications, 
 \href{http://dx.doi.org/10.1016/j.geomphys.2010.01.008}{\textit{J. Geom. Phys.}} \textbf{60} (2010), no. 5, 729--759, \href{https://arxiv.org/abs/0906.0466}{arXiv:0906.0466}. 
 
 
 \bibitem{Monterde:1996}
Monterde, J., S\'{a}nchez-Valenzuela, O.A.,
The exterior derivative as a Killing vector field,
\href{https://doi.org/10.1007/BF02761099}{\emph{Israel J. Math.}} \textbf{93} (1997), 157--170. 
 
 \bibitem{Roytenberg:2002}
 Roytenberg, D., On the structure of graded symplectic supermanifolds and Courant algebroids, \href{http://www.ams.org/books/conm/315/}{\emph{Quantization, Poisson brackets and beyond}} (Manchester, 2001), 169--185, \emph{Contemp. Math.}, \textbf{315}, Amer. Math. Soc., Providence, RI, 2002, \href{https://arxiv.org/abs/math/0203110}{arXiv:math/0203110}.

\bibitem{Schwarz:1993}
Schwarz, A.,
Semiclassical approximation in Batalin-Vilkovisky formalism,
\href{http://projecteuclid.org/euclid.cmp/1104254246}{\emph{Comm. Math. Phys.}} \textbf{158} (1993), no. 2, 373--396, \href{https://arxiv.org/abs/hep-th/9210115}{arXiv:hep-th/9210115}.


\bibitem{Shander:1980}
Shander, V.N., Vector fields and differential equations on supermanifolds, \href{https://link.springer.com/article/10.1007\%2FBF01086577}{\emph{Functional Anal. Appl.}} \textbf{14} (1980), no. 2, 160--162.

\bibitem{Shander:1988}
Shander, V.N., Orientations of supermanifolds,
\href{http://dx.doi.org/10.1007/BF01077738}{\emph{Funct. Anal. Appl.}} \textbf{22} (1988), no. 1, 80--82.

\bibitem{Vaintrob:1996}
Vaintrob, A.,
Normal forms of homological vector fields.
\href{https://doi.org/10.1007/BF02362649}{\emph{J. Math. Sci.}} 82 (1996), no. 6, 3865--3868. 

\bibitem{Vaintrob:1997}
Va\u{\i}ntrob A.Yu., Lie algebroids and homological vector f\/ields, \href{http://dx.doi.org/10.1070/RM1997v052n02ABEH001802}{\textit{Russ.
  Math. Surv.}} \textbf{52} (1997), 428--429.

\bibitem{Varadrajan:2004}
Varadarajan, V.S., \href{http://bookstore.ams.org/cln-11}{Supersymmetry for mathematicians: an introduction},
Courant Lecture Notes in Mathematics, 11. New York University, Courant Institute of Mathematical Sciences, New York; \emph{American Mathematical Society}, Providence, RI, 2004. viii+300 pp. ISBN: 0-8218-3574-2.


\bibitem{Voronov:2002}
Voronov, Th., Graded manifolds and Drinfeld doubles for Lie bialgebroids, Quantization, Poisson brackets and beyond (Manchester, 2001), 131--168, \href{https://doi.org/10.1090/conm/315/05478}{\emph{Contemp. Math.}}, \textbf{315}, Amer. Math. Soc., Providence, RI, 2002,  \href{https://arxiv.org/abs/math/0105237}{ 	arXiv:math/0105237}.

\bibitem{Voronov:2014}
Voronov, Th.,  Geometric integration theory on supermanifolds, 
Classic Reviews in Mathematics \& Mathematical Physics, \emph{Cambridge Scientific Publishers}, 2014.  150pp ISBN: 978-1-904868-82-8. 



\bibitem{Voronov:2016}
Voronov, Th., On volumes of classical supermanifolds,
\href{https://iopscience.iop.org/article/10.1070/SM8705}{Sb. Math.} \textbf{207} (2016), no. 11-12, 1512--1536, \href{https://arxiv.org/abs/1503.06542}{arXiv:1503.06542}.


\end{small}
\end{thebibliography}
\end{document}